\newtheorem{definition}{Definition}[section]
\newtheorem{remark}{Remark}[section]
\newtheorem{assumption}{Assumption}[section]
\newtheorem{proposition}{Proposition}[section]
\title{Nuisance-free Ground Collision Avoidance Design with Adaptive Exponential Control Barrier Functions}
\author{Ege Ç. Altunkaya \footnote{Research Assistant, Aviation Institute, Aerospace Research Center, Istanbul Technical University, Istanbul, Türkiye, 34469; altunkaya16@itu.edu.tr (Corresponding Author)} and İbrahim Özkol \footnote{Professor, Aerospace Research Center, Istanbul Technical University, Istanbul, Türkiye, 34469.}}
\affil{Aerospace Research Center, Istanbul Technical University, Istanbul, Türkiye, 34469.}
\begin{document}

\maketitle

\begin{abstract}
    The significance of the automatic ground collision avoidance system (Auto-GCAS) has been proven by considering the fatal crashes that have occurred over decades. Even though extensive efforts have been put forth to address the ground collision avoidance in the literature, the notion of being nuisance-free has not been sufficiently addressed. In this study, the Auto-GCAS design is formulated by merging exponential control barrier functions with sliding manifolds to manipulate the barrier function dynamics. The adaptive properties of the sliding manifolds are tailored to the key and governing flight parameters, ensuring that the nuisance-free requirement is satisfied. Furthermore, to ensure all safety requirements are met, a flight envelope protection algorithm is designed using control barrier functions to assess the commands generated by the Auto-GCAS. Eventually, the performance of the proposed methodology is demonstrated, focusing on authority-sharing, collision avoidance capability, and nuisance-free operation through various scenarios and Monte Carlo simulations. Simulation results demonstrate that the proposed adaptive exponential CBF-based Auto-GCAS achieves a $99.88\%$ ground collision avoidance success rate across diverse scenarios, without nuisance activations and while respecting aircraft dynamic limits.
\end{abstract}

\section{Introduction}
\lettrine{A}{s} one of the major causes of fatal aircraft accidents, controlled flight into terrain (CFIT) refers to accidents resulting from in-flight collisions with terrain, water, or obstacles, without any indication of a loss of control \cite{safety1, safety2}. The fact that the aircraft is still being controlled by the crew at the time of collision in this type of accident reveals that human error is the most probable cause \cite{safety1}. Even though the ground proximity warning systems (GPWS) are designed to support the pilot in taking action to prevent crashes\textemdash and have resulted in a remarkable mitigation in the number of accidents \cite{safety3}\textemdash seemingly, they do not appear to be a conclusive solution for avoiding fatal crashes \cite{AgcasHistory2} since they rely on pilot intervention. On the other hand, gravity-induced loss of consciousness (G-LOC) during highly complex military operations is another cause of CFIT \cite{safety4, AgcasHistory, safety6}. According to the research in \cite{safety5}, the total duration of a pilot's incapacitation following G-LOC is 28 seconds, an astonishing duration that inherently increases the risk of CFIT \cite{safety4, AgcasHistory}. Therefore, in the event of a pilot's blackout, pilot support systems such as GPWS or enhanced GPWS (EGPWS) lose their effectiveness, as they are warning systems, requiring manual response, rather than automation systems. At this point, in order to reduce the number of mishap of CFIT, the technology of Automatic Ground Collision Avoidance System (Auto-GCAS) emerged, generating automated recovery maneuvers to prevent ground collision. This technology has been developed over the past three decades in the U.S. and was implemented in the F-16 in November 2014 \cite{safety7, AgcasHistory2}. From its implementation until 2019, it has saved eight pilots and seven aircraft, according to the statistics in \cite{AgcasHistory2}; furthermore, it is projected to save 11 F-16s, nine pilots, and \$400 million over the remaining operational life of the F-16 \cite{safety7}.

Although the development of Auto-GCAS technology has provided significant advantages in mitigating CFIT risks, it also presents challenges stemming from human-autonomy interaction \cite{runtimeMagazine}. In this context, the unnecessary, untimely, and interruptive interventions of the Auto-GCAS are terminologically referred to as ``nuisance" \cite{runtimeMagazine}. These interventions may include premature activation or false alarms during maneuvers, disrupting pilot operations and trust in the system. Moreover, interventions that are impulse-like or insufficiently aggressive may lead the pilot to perceive them as unnecessary, as they could have executed the required recovery maneuver using a more aggressive and abrupt approach \cite{Agcas3, Agcas11}. Therefore, to ensure harmony between human and autonomy in piloted aircraft, the design of a nuisance-free Auto-GCAS is paramount, that is why, a number of studies have been conducted in order to address this issue. Nevertheless, primarily, the qualitative and quantitative definitions of ``nuisance-free" must be established. Fortunately, there is consensus on the qualitative definition: a nuisance-free intervention is characterized as ``timely and aggressive" \cite{runtimeMagazine, Agcas11, Agcas7, Agcas6, Agcas3}. This implies that the Auto-GCAS algorithm must generate a recovery maneuver that satisfies the following criteria: \textbf{(1)} the applied command must have maximum allowable magnitude, such as maximum pitch rate or maximum load factor, depending on the flight control architecture, and \textbf{(2)} the aircraft’s closest point to the ground must match the pre-defined keep-out zone, i.e. buffer. In the quantitative sense, Eq.~\eqref{nuisanceFreeQuantification} describes the notion of nuisance-free, as clearly presented in \cite{Agcas11}.

\begin{equation}
    \label{nuisanceFreeQuantification}
        \begin{split}
            \min_{t\in[t_0,t_f]} ||h_\text{A/C}(t, \bm{u}(t)) - h_\text{DTED}(t)||_2 = h_\text{buff} \\
            u_{i}(t) = 
                \begin{cases}
                    u_{\text{max}_i}, \forall t \in [t_0,t_\text{CPA}) \\
                    \text{or} \\
                    u_{\text{min}_i}, \forall t \in [t_0,t_\text{CPA}) \\
                \end{cases} 
        \end{split}
\end{equation}
where $\bm{u}(t):\mathbb{R} \to \mathbb{R}^\text{m}$ is the control input vector. Additionally, $h_\text{A/C}(t,\bm{u}(t)):\mathbb{R} \times \mathbb{R}^\text{m} \to \mathbb{R}$, $h_\text{DTED}(t): \mathbb{R} \to \mathbb{R}$, and $h_\text{buff} \in \mathbb{R}_{>0}$ are the altitude of aircraft, height of digital terrain elevation data (DTED), and height of pre-defined scalar buffer, respectively. Understandably, these two constraints imply that, for a given control input $\bm{u}(t)$, a timely aircraft trajectory corresponds to a recovery maneuver $h_\text{A/C}(t, \bm{u}(t))$ that approaches but does not violate the buffer within the time interval $t\in[t_0 \hspace{0.15cm} t_f]$, also it refers to a ``last second" maneuver \cite{Agcas11}. Furthermore, at least one control input $u_i(t)$ must reach its absolute maximum allowable magnitude during the interval $t\in[t_0 \hspace{0.15cm} t_\text{CPA})$, where $t_\text{CPA}$ is the time of the closest point of approach within $[t_0 \hspace{0.15cm}t_f]$. The illustration of nuisance and nuisance-free activations is depicted in Fig.~\ref{fig:nuisanceVSnuisanceFree}.

\begin{figure}[hbt!]
\centering
\includegraphics[width=0.6\textwidth]{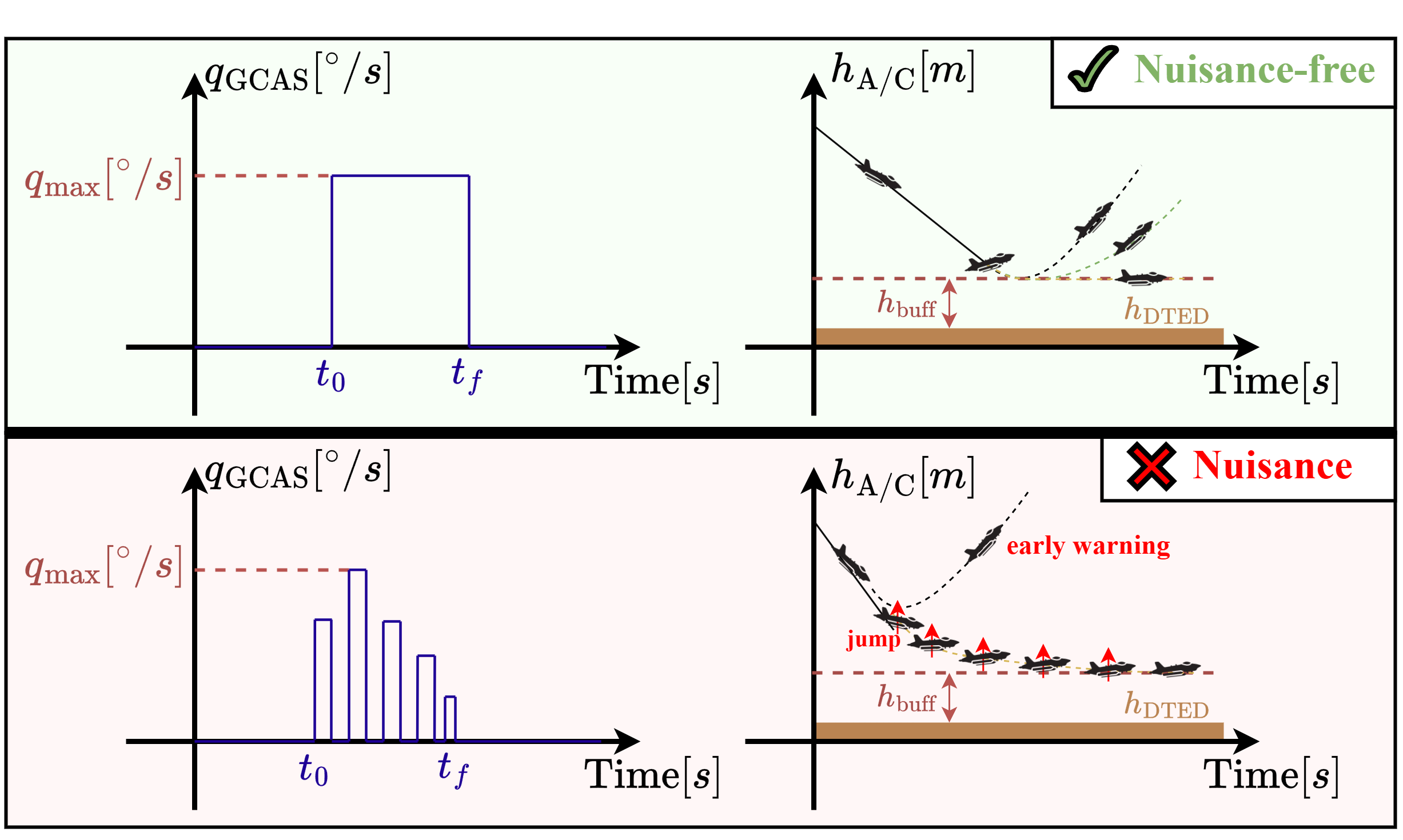}
\caption{The maneuvers and control inputs, classified as nuisance or nuisance-free, involve $q_\text{GCAS}$, the Auto-GCAS pitch rate command limited by $q_\text{max}$.}
\label{fig:nuisanceVSnuisanceFree}
\end{figure}

Although the primary focus is on achieving a nuisance-free design, the requirements of a complete Auto-GCAS framework extend further, necessitating adherence to the critical ``do no harm" principle, which encompasses a broad range of additional considerations \cite{runtimeMagazine}. Fundamentally, the Auto-GCAS must command a recovery maneuver which \textbf{(1)} does not damage pilot or the structural integrity of the aircraft, \textbf{(2)} does not stimulate an uncontrollable flight, and \textbf{(3)} allows the pilot to intervene \cite{runtimeMagazine, Agcas7, Agcas5}. In summary, the design of a comprehensive and complete Auto-GCAS framework must integrate both the nuisance-free and ``do no harm" principles, along with their associated sub-requirements. Thus, the term ``complete Auto-GCAS framework" in this study refers to a design that considers these criteria.

\subsection{Related Work}

Based on the discussed challenges, there is a limited amount of unclassified research addressing the Auto-GCAS design issue. Even though the mid-air or traffic collision avoidance problem has been extensively studied\textemdash for example, for quadrotors \cite{ca2}, fixed-wing manned \cite{ca4} and unmanned aircraft \cite{ca1, ca5}, and even for geofencing \cite{ca3}\textemdash their proposed methods cannot be directly applied to Auto-GCAS design due to the specific design criteria discussed earlier, which were not considered in those studies, with the exception of \cite{ca3}. Molnar et al. \cite{ca3} proposed a run-time assurance control framework for fixed-wing aircraft to ensure safety-critical tasks, such as mid-air collision avoidance and geofencing. They introduced various control barrier function (CBF) architectures, including higher-order, backstepping-based, and model-free methods, to address these issues. By merging multiple safety constraints into a unified CBF, they demonstrated the efficacy of their proposals through various simulations. However, as they highlighted as a direction for future work, the most significant limitation of their study may be the lack of additional safety constraints related to the aircraft's safe flight envelope\textemdash e.g. measures to prevent the aircraft from exceeding the stall angle of attack\textemdash indicating that their methodology does not incorporate the ``do no harm" requirements. Therefore, the available collision avoidance studies offer neither directly adoptable approaches nor a complete framework for the design of Auto-GCAS. Furthermore, in the existing accessible Auto-GCAS studies, the main focus area is generally nuisance-free design. In \cite{Agcas7}, the conceptual architecture of a nuisance-free Auto-GCAS was elaborated. The discussed architecture includes several key components, such as trajectory prediction algorithm to eliminate the potential nuisance situations. This component principally answers the question ``What would the resulting trajectory look like if a recovery maneuver is initiated now?" and terrain scan pattern that scans a virtual horizon using digital terrain elevation data (DTED). Seemingly, as one of the earliest publications on nuisance-free Auto-GCAS design, the study in \cite{Agcas7} laid a foundational framework for subsequent research, such as \cite{Agcas1, Agcas2, Agcas3, Agcas5, Agcas6, Agcas8, Agcas10, Agcas12}. As a follow-up to \cite{Agcas7}, the study in \cite{Agcas5} also emphasized the necessity of a trajectory prediction algorithm. Similarly, the study in \cite{Agcas3} presents a detailed framework for Auto-GCAS design, again including a trajectory prediction algorithm. Notably, several studies, including \cite{Agcas1, Agcas2, Agcas8, Agcas10}, focus specifically on trajectory prediction algorithms. For instance, \cite{Agcas1} proposed a ground collision avoidance warning and decision system with multi-trajectory risk assessment and decision functions to provide comprehensive avoidance decisions for flight crews. Likewise, \cite{Agcas8} and \cite{Agcas10} introduced trajectory prediction algorithms tailored for general aviation (GA). Finally, the study in \cite{Agcas11} proposed a nuisance-free Auto-GCAS design approach by formulating timely and aggressive recovery maneuvers as an optimal control problem, thereby eliminating the need for a trajectory prediction algorithm. However, this formulation was based on simplified point-mass, three degrees-of-freedom (3DoF) flight dynamics.

Consequently, most existing Auto-GCAS approaches in the literature rely on computationally intensive trajectory prediction algorithms to satisfy the nuisance-free criterion. This reliance typically limits implementations to simplified 3DoF kinematic models, restricting their applicability to realistic flight dynamics. Moreover, these methods often lack the flexibility to adapt to varying flight conditions or enforce dynamic constraints in real time. To the best of the authors’ knowledge, there is currently no unclassified study that offers a complete Auto-GCAS framework capable of simultaneously addressing nuisance avoidance, dynamic feasibility, and real-time adaptability within a unified and certifiable control structure.

\subsection{Objectives \& Methodology}
\label{objectivesMethodology}

Building on the discussed design challenges and existing literature, a pertinent research question arises: ``Can a computationally efficient alternative to trajectory prediction-based Auto-GCAS be developed that not only guarantees nuisance-free behavior but also provides a complete safety framework meeting all critical operational and dynamic constraints?" To answer this, the primary objective of this study is to develop an Auto-GCAS that guarantees nuisance-free operation while adhering to stringent safety requirements, including compliance with flight envelope protection principles to meet the “do no harm” mandate. To achieve this goal, this study proposes a novel design methodology that incorporates the following key components:

\begin{itemize}
\item \textbf{Exponential Control Barrier Functions (ECBF):} To establish the ground collision avoidance constraint by considering the altitude dynamics of the aircraft.
\item \textbf{Adaptive Sliding Manifolds:} To manipulate the ECBF dynamics and ensure nuisance-free operation, tailored to critical flight parameters, such as pitch angle, bank angle, and true airspeed, for adaptation to varying conditions.
\end{itemize}

Additionally, a flight envelope protection algorithm using control barrier functions is designed to validate the commands generated by the Auto-GCAS, ensuring compliance with flight envelope protection constraints, including angle of attack, load factor, and bank angle. The integration of CBF/ECBF is motivated by their advantages: \textbf{(1)} the provision of rigorous mathematical safety guarantees, \textbf{(2)} the facilitation of control authority-sharing between humans and automation through their formulations, and \textbf{(3)} the computational efficiency through linear constrained convex optimization, enabling real-time feasibility. Therefore, in contrast to trajectory planning or prediction-based approaches, CBFs require only the current state and allow online enforcement of safety without full and/or partial horizon planning. The use of exponential CBFs further enhances responsiveness near constraint boundaries, and their combination with adaptive manifolds enables tuning of intervention aggressiveness based on pitch angle, bank angle, and true airspeed.

Eventually, the proposed methodology is validated through various ground collision scenarios and Monte Carlo simulations, emphasizing \textbf{(1)} authority-sharing between human and automated controls, \textbf{(2)} collision avoidance capabilities, and \textbf{(3)} nuisance-free operation, minimizing false alarms or unnecessary interventions.

\subsection{Contributions \& Organization}
\label{contributionsOrganization}

The contributions of the study are itemized as follows;

\begin{itemize}
    \item Contrary to what has been proposed in \cite{Agcas1, Agcas2, Agcas3, Agcas5, Agcas6, Agcas8, Agcas10, Agcas12}, this research does not rely on computationally intensive trajectory prediction algorithms to generate nuisance-free recovery maneuvers. Instead, it derives an exponential control barrier function (ECBF)-based ground collision avoidance constraint. The ECBF dynamics are manipulated using adaptive sliding manifolds, which are designed through an offline optimization framework to meet the nuisance-free requirement.
    \item To achieve a complete safety framework, this study also designs a flight envelope protection (FEP) algorithm using control barrier functions (CBFs), which is similar to the previous work of the authors \cite{altunkaya2}. The FEP algorithm supervises Auto-GCAS commands to ensure compliance with critical flight envelope limits, such as the stall angle of attack and maximum load factor. This comprehensive framework addresses ``do no harm" requirement, which is a gap observed in \cite{ca3}.
    \item The proposed method achieves a success rate of $849$ out of $850$ Monte Carlo simulations for random initial dive cases, equating to $99.88\%$. This success rate is attributed to the study's formally safety-proven approach, highlighting the method's potential to significantly reduce controlled flight into terrain (CFIT) incidents.
\end{itemize}

The rest of the paper is organized as follows: In Section~\ref{preliminaries}, the necessary background for the study is provided, including nonlinear flight dynamics modeling and flight control law design. Section~\ref{groundCollisionAvoidanceSystem} introduces the design of the ground collision avoidance system, addressing nuisance-free design in Section~\ref{nuisanceFreeDesign}. To complete the safety framework, the flight envelope protection design is presented in Section~\ref{flightEnvelopeProtection}, with specific discussions on angle of attack and load factor in Section~\ref{AoALoadFactor}, and bank angle in Section~\ref{bankAngle}. Subsequently, the proposed architecture is rigorously assessed in Section~\ref{results}, through ground collision scenarios in Section~\ref{terrainCollisionResults}, pilot authority-sharing scenarios in Section~\ref{authoritySharingResults}, and Monte Carlo simulations, including $850$ different cases, in Section~\ref{monteCarloResults}. Finally, a brief discussion is provided in Section~\ref{discussion} to elaborate on the potential of the proposed method.

\section{Preliminaries}
\label{preliminaries}

The necessary background for the subsequent sections is established in this section.

\subsection{Notations}
\label{notations}

A vector is denoted in the bold type, i.e. $\bm{v}$, and the vector product of two vectors $\bm{x}$ and $\bm{y}$ is denoted by $\bm{x} \times \bm{y}$, and $\text{sgn}(*)$ is the signum function. Throughout the study, the prefix $\Delta$ denotes the incremental form, i.e. $\Delta (*)$ is the incremental form of $(*)$. The notation of $s_{*}$, $c_{*}$, and $t_{*}$ corresponds to sine, cosine, and tangent of $(*)$. Finally, a control affine system is described as given in Eq.~\eqref{controlAffineSystem}.

\begin{equation}
\label{controlAffineSystem}
    \bm{\dot{x}} = \bm{f}(\bm{x}) + \bm{g}(\bm{x})\bm{u},
\end{equation}  
where $\bm{x} \in \mathbb{R}^n$ is the state vector, and $\bm{u} \in \mathbb{R}^m$ is the control input vector. Nonlinear mappings of $\bm{f} : \mathbb{R}^n \to \mathbb{R}^n$ and $\bm{g} : \mathbb{R}^n \to \mathbb{R}^{n \times m}$ are locally Lipschitz continuous functions.

\subsection{Control Barrier Functions}
\label{exponentialControlBarrierFunction}

To formally define safety constraints in control systems, let define the notion of a barrier function $b(\bm{x})$. The function $b(\bm{x}): D \subset \mathbb{R}^n \to \mathbb{R}$ is designed such that a safe set $C$ can be defined as:
\begin{equation}
    C = \{\bm{x} \in D \subset \mathbb{R}^n \mid b(\bm{x}) \geq 0 \},
\end{equation}
where $b(\bm{x}) \geq 0$ ensures that the system state remains within the safe set, $\partial C = \{ x \in D \mid b(\bm{x}) = 0 \}$ represents the boundary, and $\text{Int}(C) = \{ x \in D \mid b(\bm{x}) > 0 \}$ denotes the interior of the safe set.

Given a continuously differentiable function $b(\bm{x})$ and a dynamical system in Eq.~\eqref{controlAffineSystem}, the Lie derivatives of $b(\bm{x})$ along the vector fields $\bm{f}(\bm{x})$ and $\bm{g}(\bm{x})$ are defined as:
\begin{equation}
    \mathcal{L}_f b(\bm{x}) = \frac{\partial b(\bm{x})}{\partial \bm{x}} \bm{f}(\bm{x}), \quad \mathcal{L}_g b(\bm{x}) = \frac{\partial b(\bm{x})}{\partial \bm{x}} \bm{g}(\bm{x}).
\end{equation}

The relative degree of $b(\bm{x})$ with respect to the control input $\bm{u}$ is the number of times it must be differentiated along the system dynamics before $\bm{u}$ explicitly appears. For a function $b(\bm{x})$ with relative degree $\delta$, its higher-order Lie derivatives are given by:
\begin{equation}
    \mathcal{L}_f^\delta b(\bm{x}) = \frac{d^\delta}{dt^\delta} b(\bm{x}),
\end{equation}
where $\mathcal{L}_f^\delta b(\bm{x})$ denotes the Lie derivative of $b(\bm{x})$ to the power of its relative degree.

\begin{definition}[Exponential Control Barrier Function]
A continuously differentiable function $b(\bm{x}): D \subset \mathbb{R}^n \to \mathbb{R}$ is an exponential control barrier function (ECBF) if there exists a coefficient gain vector $\kappa \in \mathbb{R}^\delta$ such that for all $\bm{x} \in C$:
\begin{equation}
\label{eCBF-constraint}
    \sup_{\bm{u} \in U} \big\{ \mathcal{L}^\delta_f b(\bm{x}) + \mathcal{L}_g \mathcal{L}^{\delta-1}_f b(\bm{x}) \bm{u} + \kappa \eta_b(\bm{x}) \big\} \geq 0,
\end{equation}
where $\eta_b(\bm{x}) = [b(\bm{x}), \mathcal{L}_f b(\bm{x}), ..., \mathcal{L}_f^{\delta-1} b(\bm{x})]^\mathrm{T}$ is the Lie derivative vector of $b(\bm{x})$ and $\kappa = [k_0, k_1, \dots, k_{\delta-1}]$ denotes the coefficient gain vector associated with $\eta_b(\bm{x})$.
\end{definition}

\subsection{Flight Dynamics Model}
\label{fdm}

The baseline aircraft considered is an over-actuated F-16, featuring five independent control surface actuators: the right and left horizontal tails, right and left ailerons, and the rudder. Consequently, the aerodynamic modeling and flight control law design are specifically adapted to account for this over-actuated configuration, as detailed in the subsequent sections.

\subsubsection{Equations of motion}
\label{equationsOfMotion}

The axes frame of the baseline aircraft is depicted in Fig.~\ref{f16axesFrame}.

\begin{figure}[hbt!]
\centering
\includegraphics[width=0.6\textwidth]{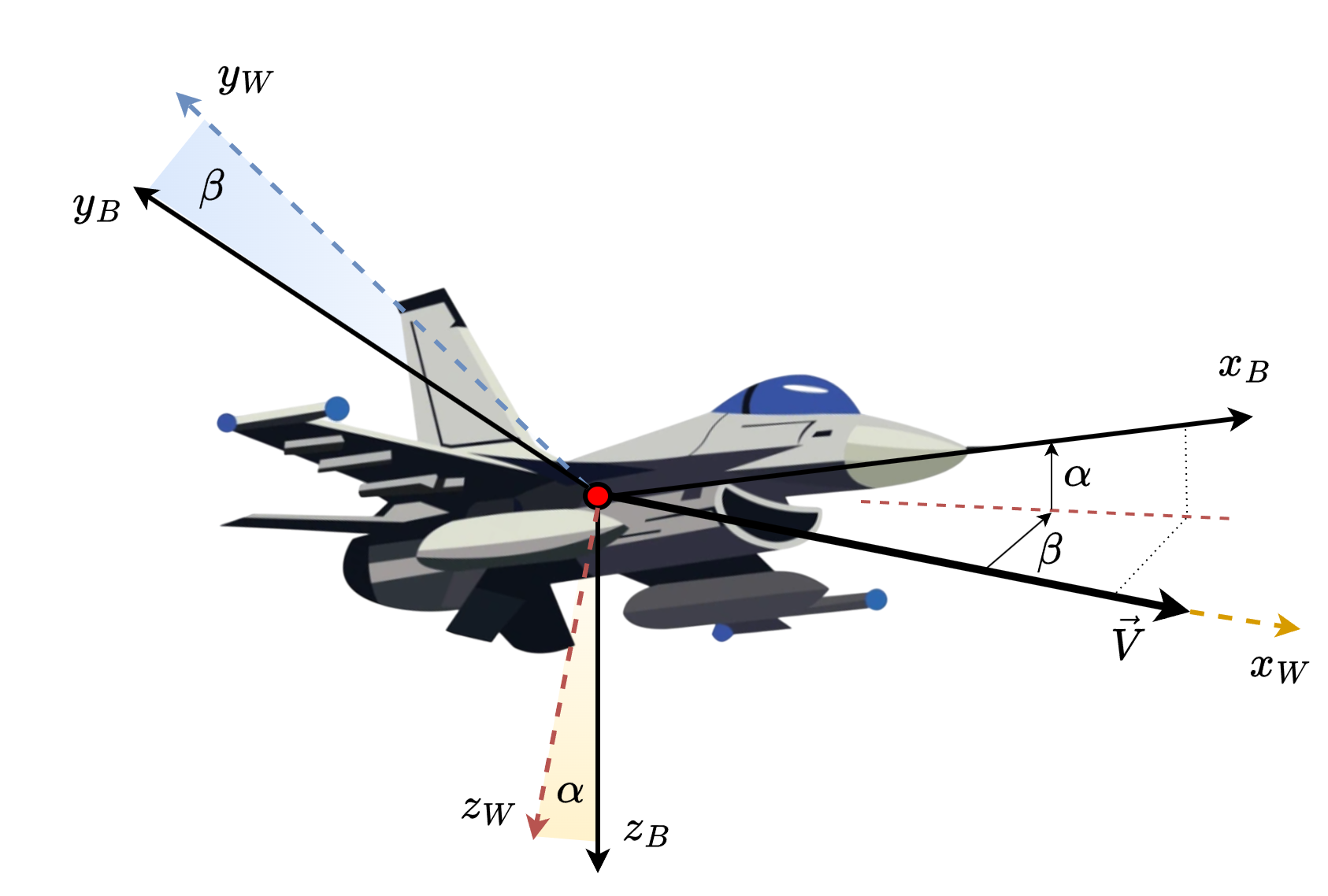}
\caption{An illustration of the baseline aircraft with its body axis (denoted as $b$) and wind axis (denoted as $w$) frames.}
\label{f16axesFrame}
\end{figure}

The set of nonlinear flight dynamics equations is presented in Eq.~\eqref{flightDynamicsEq}, including translational and rotational dynamics, and translational and rotational kinematics, respectively.

\begin{equation}
\label{flightDynamicsEq}
\begin{split}
&\begin{cases}
  \dot{u} = \sum F_x/m + rv - qw \\
  \dot{v} = \sum F_y/m + pw - ru \\
  \dot{w} = \sum F_z/m + qu - pv
\end{cases}
\\
&\begin{cases}
  \dot{p} = qr(I_\text{yy} - I_\text{zz})/I_\text{xx} + (\dot{r} + pq)I_\text{xz}/I_\text{xx} + \sum L/I_\text{xx} \\
  \dot{q} = pr(I_\text{zz} - I_\text{xx})/I_\text{yy} + (r^2 - p^2)I_\text{xz}/I_\text{yy} + \sum M/I_\text{yy} \\
  \dot{r} = pq(I_\text{xx} - I_\text{yy})/I_\text{zz} + (\dot{p} - qr)I_\text{xz}/I_\text{zz} + \sum N/I_\text{zz}
\end{cases}
\\
&\begin{cases}
  \dot{x}_\text{E} = u c_\theta c_\psi + v(s_\phi s_\theta c_\psi - c_\phi s_\psi) + w(c_\phi s_\theta c_\psi + s_\phi s_\psi) \\
  \dot{y}_\text{E} = u c_\theta c_\psi + v(s_\phi s_\theta c_\psi + c_\phi c_\psi) + w(c_\phi s_\theta s_\psi - s_\phi s_\psi) \\
  \dot{z}_\text{E} = -u s_\theta + v s_\phi c_\theta + w c_\phi c_\theta
\end{cases}
\\
&\begin{cases}
  \dot{\phi} = p + t_\theta(q s_\phi + r c_\phi) \\
  \dot{\theta} = q c_\phi - r s_\phi \\
  \dot{\psi} = (q s_\phi + r c_\phi) / c_\theta
\end{cases}
\\
\end{split}
\end{equation}
where $[u \hspace{0.15cm} v \hspace{0.15cm} w]^\mathrm{T} \in \mathbb{R}^3$ represent the body velocity components, while $[p \hspace{0.15cm} q \hspace{0.15cm} r]^\mathrm{T} \in \mathbb{R}^3$ are the body angular rate components. The navigational position components are denoted as $[x_\text{E} \hspace{0.15cm} y_\text{E} \hspace{0.15cm} z_\text{E}]^\mathrm{T} \in \mathbb{R}^3$, and $[\phi \hspace{0.15cm} \theta \hspace{0.15cm} \psi]^\mathrm{T} \in \mathbb{R}^3$ represent the Euler angles. The force components acting on the body frame are $[F_x \hspace{0.15cm} F_y \hspace{0.15cm} F_z]^\mathrm{T} \in \mathbb{R}^3$, and $[L \hspace{0.15cm} M \hspace{0.15cm} N]^\mathrm{T} \in \mathbb{R}^3$ correspond to the roll, pitch, and yaw moments. Finally, $m$ denotes the mass of the aircraft, and $I_\text{xx}$, $I_\text{yy}$, $I_\text{zz}$, and $I_\text{xz}$ are the moment of inertia components of inertia tensor $J \in \mathbb{R}^{3 \times 3}$ of the aircraft.

\subsubsection{Aerodynamics \& Actuators}
\label{aerodynamicsActuators}

The over-actuated aerodynamic model is based on the methodology presented in \cite{aerodynamicPolynomials, forceMomentEquations}, where aerodynamic coefficients are expressed as polynomial functions of the relevant states and control surface deflections. Comprehensive details of this formulation are available in \cite{aerodynamicPolynomials}. Additionally, the actuator dynamics are represented by a first-order system incorporating time constants, rate limits, and position saturation constraints, as described in \cite{F16}. 

\subsection{Flight Control Law Design}  
\label{flightControlLaw}  

The flight control law is composed of two primary elements: \textbf{(1)} a control augmentation system (CAS) implementing a single-loop angular rate control based on nonlinear dynamic inversion (NDI), and \textbf{(2)} an incremental nonlinear control allocation (INCA) to handle over-actuation.

\subsubsection{Control Augmentation System Design}  
\label{controlAugmentationSystem}  

The derivation of the control law leverages the control-affine structure of Euler's equations of motion, represented in a decomposed form in Eq.~\eqref{rotationalDynamics2}.

\begin{equation}
\label{rotationalDynamics2}
    \underbrace{\bm{\dot{\omega}}}_{\substack{\dot{\bm{x}}}} =  
    \underbrace{-J^{-1} (\bm{\omega} \times J \bm{\omega})}_{\substack{\bm{f}(\bm{x})}} +  
    \underbrace{J^{-1}\bar{q}_\infty S  
    \begin{bmatrix}
      b &  &  \\ 
      & \bar{c} &  \\ 
      &  & b \\ 
    \end{bmatrix}}_{\substack{\bm{g}(\bm{x})}}  
    \underbrace{\bm{\tau}}_{\substack{\bm{u}}},  
\end{equation}  
where $\bm{\tau} \in \mathbb{R}^3$, $\Bar{q}_\infty \in \mathbb{R}$, $S \in \mathbb{R}$, $b \in \mathbb{R}$, and $\Bar{c} \in \mathbb{R}$ represent the aerodynamic moment coefficient vector, dynamic pressure, wing area, wing span, and mean aerodynamic chord, respectively. For a control-affine system as defined in Eq.~\eqref{controlAffineSystem}, the NDI control law is expressed by Eq.~\eqref{NDIlaw1}.

\begin{equation}
\label{NDIlaw1}
    \bm{u} = \bm{g}(\bm{x})^{-1}[\bm{\nu} - \bm{f}(\bm{x})],
\end{equation}  
where $\bm{\nu} = \dot{\bm{x}}_c$ is the virtual input designed using a linear controller. The control moment coefficients for angular rate regulation are then derived from Eq.~\eqref{NDIlaw1}, as presented in Eq.~\eqref{NDIlaw2}.

\begin{equation}
\label{NDIlaw2}
    \bm{\tau}_c = \Bigg\{J^{-1}\bar{q}_\infty S  
    \begin{bmatrix}
      b &  &  \\ 
      & \bar{c} &  \\ 
      &  & b \\ 
    \end{bmatrix} \Bigg\}^{-1}\Big[\bm{\dot{\omega}}_c + J^{-1} (\bm{\omega} \times J \bm{\omega})\Big].
\end{equation}  

The virtual input $\bm{\dot{\omega}}_c \in \mathbb{R}^3$ is given by Eq.~\eqref{eq:fca13}.

\begin{equation} \label{eq:fca13}
\bm{\dot{\omega}}_c =  
\underbrace{\begin{bmatrix}
K_p & & \\ 
 & K_q & \\ 
 & & K_r \\ 
\end{bmatrix}}_{\substack{\overset{\Delta}{=} K}}  
\underbrace{\begin{bmatrix}
p_\text{pilot} - p \\ 
q_\text{pilot} - q \\ 
r_\text{pilot} - r \\ 
\end{bmatrix}}_{\substack{\overset{\Delta}{=} e}},
\end{equation}  
where $K \overset{\Delta}{=} \{K_p, K_q, K_r\}$ denotes the gain matrix for roll, pitch, and yaw channels, respectively. These formulations provide the required control moment coefficients in response to pilot commands $p_\text{pilot}\in \mathbb{R}$, $q_\text{pilot}\in \mathbb{R}$, and $r_\text{pilot}\in \mathbb{R}$. The control moment coefficients are subsequently transferred to the control allocation module, detailed in the next section.

\subsubsection{Control Allocation Design}  
\label{controlAllocation}  

Incremental nonlinear control allocation (refer to \cite{allocation1, allocation2} for further details) is defined by Eq.~\eqref{eq:fca14}.

\begin{equation} \label{eq:fca14}
    \Delta \bm{\delta}_c = \Phi^{-1}(\bm{x}_0, \bm{\delta}_0) \Delta \bm{\tau}_c,
\end{equation}  
where $\Delta \bm{\delta}_c = \bm{\delta}_c - \bm{\delta}_0$ and $\Delta \bm{\tau}_c = \bm{\tau}_c - \bm{\tau}_0$, with the subscript “$0$” denoting the current state. The control effectivity matrix $\Phi \in \mathbb{R}^{3 \times n}$, shown in Eq.~\eqref{eq:fca15}, contains moment coefficient derivatives with respect to control surface deflections at the current state, where $n$ represents the number of control surfaces. The control surface deflections $\bm{\delta} \in \mathbb{R}^5$ correspond to the right and left horizontal tails, right and left ailerons, and the rudder.

\begin{equation} \label{eq:fca15}
\Phi(\bm{x}_0, \bm{\delta}_0) =  
\begin{bmatrix}
{\dfrac{\partial C_l}{\partial \delta_1}}\Big|_{(\bm{x}_0, \bm{\delta}_0)} & \cdots & {\dfrac{\partial C_l}{\partial \delta_n}}\Big|_{(\bm{x}_0, \bm{\delta}_0)} \\ 
{\dfrac{\partial C_m}{\partial \delta_1}}\Big|_{(\bm{x}_0, \bm{\delta}_0)} & \cdots & {\dfrac{\partial C_m}{\partial \delta_n}}\Big|_{(\bm{x}_0, \bm{\delta}_0)} \\ 
{\dfrac{\partial C_n}{\partial \delta_1}}\Big|_{(\bm{x}_0, \bm{\delta}_0)} & \cdots & {\dfrac{\partial C_n}{\partial \delta_n}}\Big|_{(\bm{x}_0, \bm{\delta}_0)} \\ 
\end{bmatrix}.
\end{equation}  

With five independent control surfaces ($n = 5$), $\Phi$ is a non-square matrix and can only be inverted using the Moore-Penrose pseudo-inverse ($\Phi^\ddag$). The final control surface deflections responding to the control moment coefficients $\bm{\tau}_c$ are determined by Eq.~\eqref{eq:fca17}.

\begin{equation} \label{eq:fca17}
    \bm{\delta}_c = \Phi^\ddag \Delta \bm{\tau}_c + \bm{\delta}_0.
\end{equation}  

These equations complete the flight control law for angular rate control.

\section{Ground Collision Avoidance System Design}
\label{groundCollisionAvoidanceSystem}

The principal strategy of ground collision avoidance should be to redesign the pilot commands of $p_\text{pilot}$ and $q_\text{pilot}$ since a rolling and pitching maneuver is expected. Thus, the altitude dynamics of the aircraft, which is $\dot{h} = -\dot{z}_E$, given by Eq.~\eqref{flightDynamicsEq}, should be such decomposed that the roll rate $p$ and pitch rate $q$ should be observable. However, as the pitch rate is the principal governing factor for altitude dynamics, the roll rate is reserved for use in the other recovery maneuver, i.e. bank-to-level, which will be scrutinized in the proceeding section. Therefore, this section focuses on defining the dynamics to isolate the pitch rate. Then, design a barrier function as presented in Eq.~\eqref{barrierH1}.

\begin{equation}
\label{barrierH1}
    b(h) = h - (h_\text{buff} + h_\text{DTED}).
\end{equation}

It is obvious that $b(h) > 0, \forall h \in (h_\text{buff} + h_\text{DTED}, \infty)$, and $b(h) = 0 \iff h = h_\text{buff} + h_\text{DTED}$. Thus, the time derivative of the barrier function is in Eq.~\eqref{barrierH2}.

\begin{equation}
\label{barrierH2}
    \dot{b}(h) = \dot{h} = \mathcal{L}_f b(h).
\end{equation}

However, in the first time derivative, desired input does not appear, thereby the second time derivative of the barrier function is presented in Eq.~\eqref{barrierH3}.

\begin{equation}
\begin{split}
\label{barrierH3}
    \ddot{b}(h) = \ddot{h} &= f(h) + g(h)u \\
    &= \mathcal{L}^2_f b(h) + \mathcal{L}_g \mathcal{L}_f b(h) u.
\end{split}
\end{equation}

The roll and pitch rates can be explicitly obtained by expanding Eq.~\eqref{barrierH3}; however, for simplicity, the following assumption in \ref{assH1} is both convenient and consistent.

\begin{assumption}
\label{assH1}
    We assume that the bank angle ($\phi$) and sideslip angle ($\beta$) are both $0^\circ$, as wing-level and symmetric flight is expected during the ground collision avoidance maneuver.
\end{assumption}

Thereby, the altitude dynamics of the aircraft turns into the following form given in Eq.~\eqref{barrierH4}.

\begin{equation}
\label{barrierH4}
    \dot{h} = V_T s_\gamma,
\end{equation}
where $V_T = ||\bm{V}||_2$, note that $\bm{V} = [u \hspace{0.15cm} v \hspace{0.15cm} w]^\mathrm{T} \in \mathbb{R}^3$, is the true airspeed, and $\gamma = \theta - \alpha$ due to the wings-level symmetric flight. Consequently, the altitude dynamics are reduced to dependencies on true airspeed, pitch angle, and angle of attack. At this stage, these dynamics should be decomposed to isolate the pitch rate. Based on this rationale, the second time derivative of the barrier function to observe the pitch rate is given by Eq.~\eqref{barrierH5}.

\begin{equation}
\label{barrierH5}
\begin{aligned}
    \ddot{h} &= \dot{V}_T s_{(\theta - \alpha)} + V_T c_{(\theta - \alpha)}\big(\dot{\theta} - \dot{\alpha} \big) \\
    &= \big[f(V_T) + g(V_T)q \big] s_{(\theta - \alpha)} + V_T c_{(\theta - \alpha)} \big[f(\theta) + g(\theta)q - f(\alpha) - g(\alpha)q \big]. \\
\end{aligned}
\end{equation}

Since true airspeed, pitch rate, and angle of attack dynamics involve the pitch rate in their first derivative, it is appropriate to describe them in the form of $\dot{V}_T = f(V_T) + g(V_T)q$, $\dot{\theta} = f(\theta) + g(\theta)q$, and $\dot{\alpha} = f(\alpha) + g(\alpha)q$, respectively. Afterwards, for the sake of clarity, the description of the altitude dynamics in the form of $\ddot{h} = f(h) + g(h)q$ is presented in Eq.~\eqref{barrierH6}.

\begin{equation}
\label{barrierH6}
\begin{aligned}
    \ddot{h} &= \underbrace{f(V_T) + V_T c_{(\theta - \alpha)} \big[f(\theta) - f(\alpha)\big]}_{\substack{f(h)}} + \underbrace{\Big(g(V_T)s_{(\theta - \alpha)} + V_T c_{(\theta - \alpha)} \big[g(\theta) - g(\alpha)\big] \Big)}_{\substack{g(h)}} q.
\end{aligned}
\end{equation}

At this point, the dynamics of true airspeed, pitch angle, and angle of attack should be defined; however, the detailed derivation of these dynamics is not included in this study, as it is readily available in the flight dynamics and control literature, e.g. \cite{snell, altunkaya1, altunkaya3}. The true airspeed dynamics are given by Eq.~\eqref{barrierH7}.

\begin{equation}
\label{barrierH7}
\begin{aligned}
    \dot{V}_T &= \dfrac{1}{m}\big[-Dc_\beta + Cs_\beta + Tc_\alpha c_\beta - mg\big( s_\theta c_\alpha c_\beta - c_\theta s_\phi s_\beta - c_\theta c_\phi s_\alpha c_\beta \big) \big],
\end{aligned}
\end{equation}
where $m$ is the mass of the aircraft, $g$ is gravity, $D$, $C$, and $T$ denote the drag, cross, and thrust forces, respectively. A further decomposition for the drag force can be performed, as presented in Eq.~\eqref{dragDecomposition}.

\begin{equation}
\label{dragDecomposition}
    \dfrac{-D c_\beta}{m} = \dfrac{-\bar{q}_\infty S C_{D_q} \dfrac{q\bar{c}}{2V_T} c_\beta}{m} + \dfrac{-\bar{q}_\infty S C_{D_\text{aux}} c_\beta}{m}.
\end{equation}

This decomposition should be done to reveal hidden pitch rate contributions within the aerodynamic coefficients, where $C_{D_\text{aux}}$ represents remaining lift coefficient terms excluding $C_{D_q}$. For the simplicity, the assumption in \ref{assH1} also leads to the following assumption described in \ref{assH2}.

\begin{assumption}
\label{assH2}
    For the small magnitudes of sideslip angle ($\beta$), the contribution of the term of $C s_\beta$ becomes negligible. Note that the sideslip angle is expected to be $0^\circ$ during a symmetric level flight; therefore, this assumption is convenient and consistent.
\end{assumption}

Finally, one can easily obtain the components of true airspeed dynamics, i.e. $f(V_T)$ and $g(V_T)$, as given by Eq.~\eqref{barrierH8}.

\begin{equation}
\label{barrierH8}
\begin{aligned}
    f(V_T) &= \frac{T c_\alpha c_\beta - mg\big(s_\theta  c_\alpha  c_\beta - c_\theta s_\phi s_\beta - c_\theta c_\phi s_\alpha c_\beta \big)}{m}  - \dfrac{\bar{q}_{\infty} S C_{D_\text{aux}} c_\beta}{m}, \\
    g(V_T) &= -\dfrac{\bar{q}_\infty S C_{D_q} \dfrac{\bar{c}}{2V_T} c_\beta}{m}.
\end{aligned}
\end{equation}

As the next step, define the pitch angle dynamics as given in Eq.~\eqref{barrierH9}.

\begin{equation}
\label{barrierH9}
    \dot{\theta} = q c_\phi - r s_\phi.
\end{equation}

It is quite easy to decompose the components of pitch angle dynamics, as given by Eq.~\eqref{barrierH10}.

\begin{equation}
\label{barrierH10}
\begin{aligned}
    &f(\theta) = -rs_\phi, \\
    &g(\theta) = c_\phi.
\end{aligned}
\end{equation}

The final step is the introducing the angle of attack dynamics, given in Eq.~\eqref{barrierH11}.

\begin{equation}
\label{barrierH11}
\begin{aligned}
    \dot{\alpha} &= \dfrac{-L}{m V_T c_\beta} + \dfrac{mg\big(c_\theta c_\phi c_\alpha + s_\theta s_\alpha \big) - T}{m V_T c_\beta} + q - t_\beta \big(p c_\alpha + r s_\alpha \big),
\end{aligned}
\end{equation}
where $L$ denotes the lift force. Again, a decomposition for the lift force can be applied to reveal the hidden pitch rate dynamics, as given by Eq.~\eqref{barrierH12}.

\begin{equation}
\label{barrierH12}
    \dfrac{-L}{m V_T c_\beta} = \dfrac{-\bar{q}_\infty S C_{L_q} \dfrac{q\bar{c}}{2V_T}}{mV_T c_\beta} + \dfrac{-\bar{q}_\infty S C_{L_\text{aux}} }{mV_T c_\beta},
\end{equation}
where $C_{L_\text{aux}}$ are remaining lift coefficient terms excluding $C_{L_q}$. Finally, the components of the angle of attack dynamics are presented in Eq.~\eqref{barrierH13}.

\begin{equation}
\label{barrierH13}
\begin{aligned}
    f(\alpha) &= \frac{m g \big(c_\theta c_\phi c_\alpha + s_\theta s_\alpha \big) - T}{m V_T c_\beta} - t_\beta \big(p c_\alpha + r s_\alpha \big) - \frac{\bar{q}_{\infty} S C_{L_\text{aux}}}{m V_T c_\beta}, \\
    g(\alpha) &= 1 - \dfrac{\bar{q}_\infty S C_{L_q} \frac{\bar{c}}{2V_T}}{m V_T c_\beta}.
\end{aligned}
\end{equation}

Then, the ECBF constraint for the Auto-GCAS is prepared for construction. In this regard, the $\eta_b(h)$ vector can be represented by Eq.~\eqref{barrierH14}.

\begin{equation}
\label{barrierH14}
\eta_b(h)
=
\begin{bmatrix}
b(h) \\
\mathcal{L}_f b(h) \\
\end{bmatrix} 
=
\begin{bmatrix}
h - (h_\text{buff} + h_\text{DTED}) \\
\dot{h} \\
\end{bmatrix}.
\end{equation}

Furthermore, the ECBF constraint for the altitude dynamics as pitch rate is the command of Auto-GCAS can be described in Eq.~\eqref{barrierH15}.

\begin{equation}
\label{barrierH15}
    \underbrace{f(h) + g(h)q_\text{GCAS}}_{\substack{\mathcal{L}^2_f b(h) + \mathcal{L}_g \mathcal{L}_f b(h) q}} + \kappa \eta_b(h) \geq 0,
\end{equation}
where $\kappa = [k_1 \hspace{0.15cm} k_2]$. Consequently, the final form of the Auto-GCAS formulation for generating a pitch rate command to protect the aircraft from the DTED with a buffer height is presented in Eq.~\eqref{barrierH16}.

\begin{equation}
\label{barrierH16}
    \begin{aligned}
    u^\star = \operatorname*{argmin}_{q_\text{GCAS} \in \mathbb{R}} \quad & \dfrac{1}{2}\big(q_\text{GCAS} - q_\text{pilot}\big)^2 \\
    \textrm{s.t.} \quad & f(h) + g(h) q_\text{GCAS} + \kappa \eta_b(h) \geq 0 \\
                \quad & q_\text{min} \leq q_\text{GCAS} \leq q_\text{max}. \\
    \end{aligned}
\end{equation}

The constructed formulation enables the Auto-GCAS to generate a pitch rate command ($q_\text{GCAS}$) that closely follows the pitch rate command of pilot ($q_\text{pilot}$) while adhering to the constraint for ground collision avoidance. Additionally, the generated pitch rate command, $q_\text{GCAS}$, must remain within the interval $[q_\text{min} \hspace{0.15cm} q_\text{max}]$, considering the admissible and allowable pitch rate limits depending on the aircraft. However, at this stage, the nuisance-free criterion has not been addressed, and the question of whether it is possible to generate nuisance-free commands by adjusting the gains of $\kappa$, i.e., $k_1$ and $k_2$, will be elaborated in the proceeding section.

\subsection{Nuisance-free Intervention Design}
\label{nuisanceFreeDesign}

It is clearly observable that the ECBF constraint, given in Eq.~\eqref{barrierH15}, leads to a second-order linear system as presented in Eq.~\eqref{nuisance1}, setting $h_\text{buff} + h_\text{DTED} = 0$ for the sake of homogeneity. 

\begin{equation}
\label{nuisance1}
    \underbrace{\mathcal{L}^2_f b(h) + \mathcal{L}_g \mathcal{L}_f b(h) q_\text{GCAS}}_{\substack{\ddot{h}}} +  \underbrace{\kappa \eta_b(h)}_{\substack{k_2 \dot{h} + k_1 h}} \geq 0.
\end{equation}

The system also represents the second-order altitude dynamics, and the state-space form of the obtained system is given in Eq.~\eqref{nuisance2}, with a new state vector defined as $\bm{\xi} = [\xi_1 \hspace{0.15cm} \xi_2]^\mathrm{T}$.

\begin{equation}
\label{nuisance2}
\begin{bmatrix}
\dot{\xi_1} \\
\dot{\xi_2} \\
\end{bmatrix} 
=
\underbrace{\begin{bmatrix}
0 & 1 \\
-k_1 & -k_2
\end{bmatrix}}_{\substack{\Phi(\kappa)}} 
\begin{bmatrix}
\xi_1 \\
\xi_2 \\
\end{bmatrix},
\end{equation}
where $\xi_1 = h$ and $\xi_2 = \dot{h}$. The obtained second-order linear system has a clearly observable unique equilibrium point at the origin. Thereby, the manipulation of the ECBF dynamics is quite easy by adjusting the gains $k_1$ and $k_2$ in such a way that sliding manifolds are established over the phase portrait of the ECBF dynamics. Nevertheless, the main question is the classification of the equilibrium point, namely whether it is a node, saddle-node, circle, spiral, or degenerate node. In this regard, the most appropriate option should definitely be an attractor point, considering the assurance of stability regardless of the initial points within the phase space. Therefore, the saddle-node, circle, and degenerate node options are eliminated. Moreover, a spiral equilibrium point would likely result in an Auto-GCAS command that could be regarded as nuisance, considering Fig.~\ref{fig:nuisanceVSnuisanceFree}, since an oscillatory behavior would be observable in altitude dynamics. Consequently, the equilibrium point is preferred to be an attractive node with a high damping ratio, i.e., $\zeta \geq 1$. As a final point of concern, it should be noted that the higher the damping ratio, the more sluggish the altitude dynamics become. In other words, a setting of $\zeta > 1$ may result in an untimely intervention. Therefore, as the final specification, the setting $\zeta = 1$ is preferred as the most convenient option. The illustration of the phase portrait of the system defined in Eq.~\eqref{nuisance2} is depicted in Fig.~\ref{phasePortrait}.

\begin{figure}[hbt!]
\centering
\includegraphics[width=0.6\textwidth]{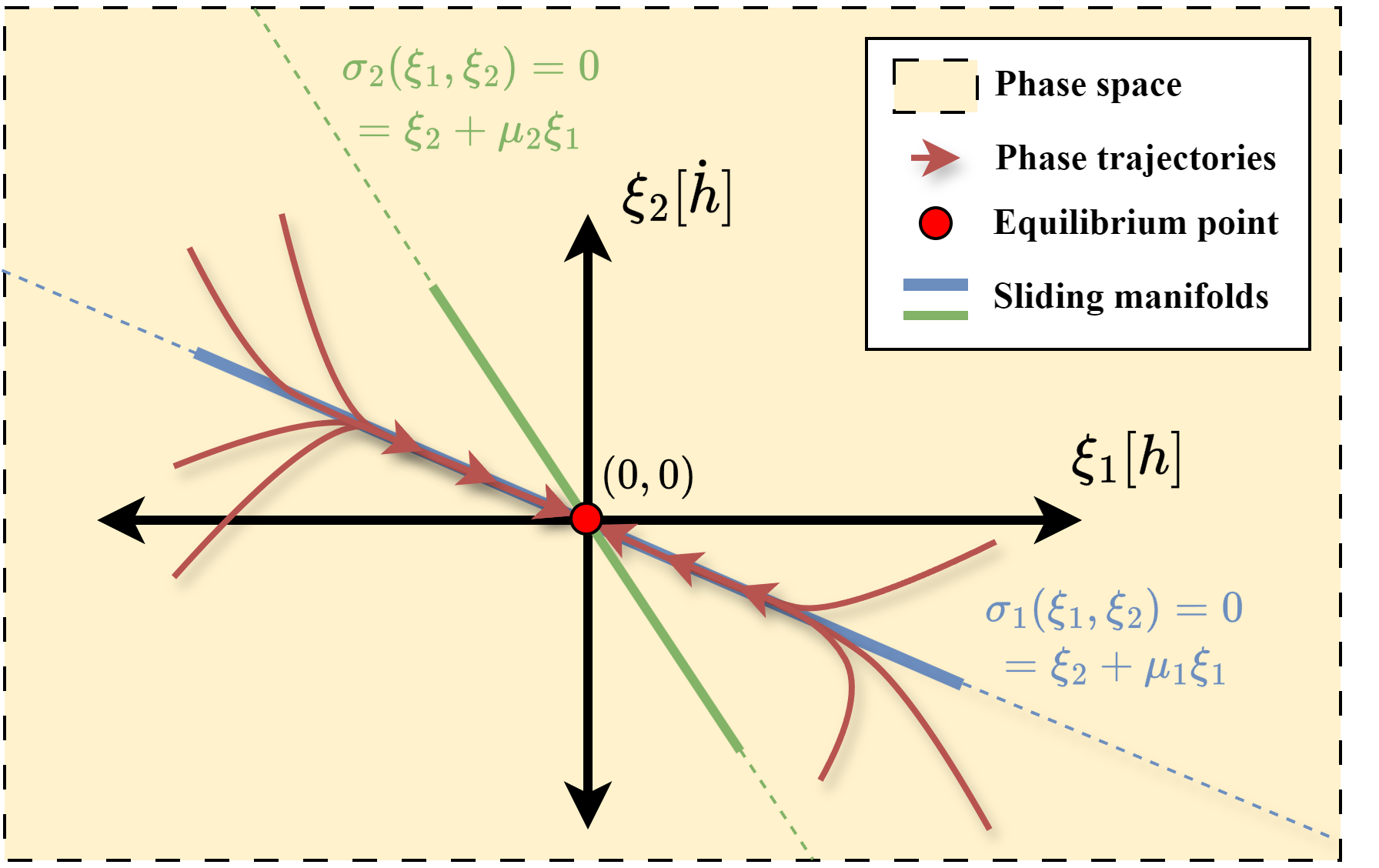}
\caption{An illustration of the phase portrait of the system in Eq.~\eqref{nuisance2}: representative linear sliding manifolds, $\sigma_1$ and $\sigma_2$, where $\mu = k_1/k_2$.}
\label{phasePortrait}
\end{figure}

Then, the proposition in \ref{prop1} must hold for the equilibrium point to be an attractive node.

\begin{proposition}
\label{prop1}
    The equilibrium point at the origin is an attractive node provided that $2\sqrt{k_1} \leq k_2$, where $k_1, k_2 \in \mathbb{R}_{>0}$.
\end{proposition}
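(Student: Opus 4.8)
The plan is to analyze the eigenvalues of the system matrix $\Phi(\kappa)$ in Eq.~\eqref{nuisance2} and show that the stated inequality is precisely the condition for a (stable) node. The characteristic polynomial of $\Phi(\kappa) = \begin{bmatrix} 0 & 1 \\ -k_1 & -k_2 \end{bmatrix}$ is $\lambda^2 + k_2 \lambda + k_1 = 0$, so the eigenvalues are $\lambda_{1,2} = \tfrac{1}{2}\left(-k_2 \pm \sqrt{k_2^2 - 4k_1}\right)$. First I would recall the standard classification of planar linear systems via trace $\tau = -k_2$ and determinant $\Delta = k_1$: an equilibrium is an attractive (stable) node exactly when $\Delta > 0$, $\tau < 0$, and the discriminant $\tau^2 - 4\Delta \geq 0$, i.e. both eigenvalues are real, negative, and of the same sign.

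Second, I would verify each condition in turn. Since $k_1, k_2 \in \mathbb{R}_{>0}$ by hypothesis, we immediately have $\Delta = k_1 > 0$ and $\tau = -k_2 < 0$, which rules out a saddle (eigenvalues of opposite sign) and guarantees that, whatever the eigenvalues are, their real parts are negative (for complex pairs, $\mathrm{Re}(\lambda) = -k_2/2 < 0$; for real ones, Vieta gives product $k_1 > 0$ and sum $-k_2 < 0$, so both are negative). The remaining and decisive requirement is that the eigenvalues be real rather than complex conjugate, which is exactly $k_2^2 - 4k_1 \geq 0$, equivalently $k_2 \geq 2\sqrt{k_1}$ — the stated inequality. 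When this holds with strict inequality the origin is a genuine attractive node with two distinct real negative eigenvalues; the boundary case $k_2 = 2\sqrt{k_1}$ gives a repeated eigenvalue $\lambda = -k_2/2$ (a degenerate/improper node, which is why the proposition states $\zeta = 1$ as the chosen design point and writes the inequality as non-strict, $2\sqrt{k_1} \leq k_2$).

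Third, I would connect this to the damping-ratio language used in the surrounding text: writing the second-order system $\ddot{h} + k_2 \dot{h} + k_1 h = 0$ in standard form $\ddot{h} + 2\zeta\omega_n \dot{h} + \omega_n^2 h = 0$ identifies $\omega_n = \sqrt{k_1}$ and $\zeta = k_2/(2\sqrt{k_1})$, so $\zeta \geq 1 \iff k_2 \geq 2\sqrt{k_1}$, making explicit that the overdamped/critically-damped regime is precisely the node regime and tying the proof back to the nuisance-free rationale ($\zeta \geq 1$ avoids oscillatory, hence nuisance-inducing, altitude behavior).

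I do not expect any genuine obstacle here — the result is a textbook eigenvalue classification — so the only real care needed is in the bookkeeping at the boundary $k_2 = 2\sqrt{k_1}$: one should state clearly whether ``attractive node'' is meant to include the critically damped (repeated-eigenvalue, non-diagonalizable) case, since strictly speaking that is an improper node. Given that the proposition uses the non-strict inequality and the text explicitly selects $\zeta = 1$, I would simply note that the repeated-root case is the limiting critically damped node and is retained deliberately as the design choice, so the proof's conclusion matches the statement as written.
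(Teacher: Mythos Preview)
Your proposal is correct and follows essentially the same approach as the paper: both arguments reduce to observing that the eigenvalues of $\Phi(\kappa)$ satisfy $\mathrm{Re}(\lambda) < 0$ and $\mathrm{Im}(\lambda) = 0$ under the stated condition. You simply carry out the eigenvalue/discriminant computation in full detail, whereas the paper's proof is a one-line remark leaving the verification to the reader.
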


\begin{proof}
    One can easily prove, noticing that $\text{Re}(\lambda(\Phi(\kappa))) < 0$ and $\text{Im}(\lambda(\Phi(\kappa))) = 0$, where $\lambda$ to be the eigenvalues of $\kappa$.
\end{proof}

Lastly, the proposition in \ref{prop2} must hold for the system dynamics to have a damping ratio $\zeta = 1$.

\begin{proposition}
\label{prop2}
    The damping ratio of the system dynamics, $\zeta = 1$, provided that $2\sqrt{k_2} = k_1$.
\end{proposition}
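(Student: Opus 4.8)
The plan is to observe that the equality branch of the ECBF constraint in Eq.~\eqref{nuisance1}, equivalently the autonomous linear system $\bm{\dot\xi} = \Phi(\kappa)\bm{\xi}$ in Eq.~\eqref{nuisance2}, is simply the scalar second-order ordinary differential equation $\ddot{\xi}_1 + k_2\dot{\xi}_1 + k_1\xi_1 = 0$ obtained by eliminating $\xi_2 = \dot{\xi}_1$. I would then match this against the canonical damped second-order form $\ddot{x} + 2\zeta\omega_n\dot{x} + \omega_n^2 x = 0$, read off the natural frequency and damping ratio by inspection, impose $\zeta = 1$, and solve for the required relation between $k_1$ and $k_2$.

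First I would write the characteristic polynomial of $\Phi(\kappa)$, namely $p(\lambda) = \lambda^2 + k_2\lambda + k_1$, and note that it coincides with the characteristic polynomial of the scalar ODE above, so the modal analysis may be carried out on either object. Matching coefficients with $\lambda^2 + 2\zeta\omega_n\lambda + \omega_n^2$ gives $\omega_n = \sqrt{k_1}$ (well defined since $k_1 > 0$ by hypothesis, inherited from Proposition~\ref{prop1}) and $2\zeta\omega_n = k_2$, hence $\zeta = k_2/(2\sqrt{k_1})$. Setting $\zeta = 1$ immediately yields $k_2 = 2\sqrt{k_1}$, i.e. the critical-damping condition, which is precisely the equality case of the inequality $2\sqrt{k_1}\le k_2$ established in Proposition~\ref{prop1}; this consistency serves as a built-in sanity check. (I note the statement as written, $2\sqrt{k_2}=k_1$, should read $2\sqrt{k_1}=k_2$ for it to be consistent with Proposition~\ref{prop1}; the argument is otherwise unaffected.)

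As an independent verification I would argue directly from the spectrum of $\Phi(\kappa)$: $\zeta = 1$ corresponds to a repeated real eigenvalue, the degenerate critically-damped node lying on the boundary between the overdamped and underdamped regimes, which occurs exactly when the discriminant of $p(\lambda)$ vanishes, $k_2^2 - 4k_1 = 0$, again giving $k_2 = 2\sqrt{k_1}$. One may also exhibit the repeated eigenvalue explicitly as $\lambda_{1,2} = -k_2/2 = -\sqrt{k_1}$ and confirm $\operatorname{Re}(\lambda) < 0$, tying the result back to the attractivity established earlier.

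There is essentially no analytical obstacle here: the result is a direct coefficient identification for a linear time-invariant second-order system. The only points that merit a word of care are (i) justifying that the equality branch of the ECBF inequality is indeed the closed-loop altitude dynamics whose damping ratio is being referenced, which follows from the homogenizing normalization $h_\text{buff} + h_\text{DTED} = 0$ used to pass from Eq.~\eqref{barrierH15} to Eq.~\eqref{nuisance1}; and (ii) ensuring $k_1, k_2 \in \mathbb{R}_{>0}$ so that $\sqrt{k_1}$, $\omega_n$, and $\zeta$ are real and well defined, which is part of the standing hypotheses.
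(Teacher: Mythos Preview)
Your argument is correct and takes the same coefficient-matching route as the paper's one-line proof. You are also right to flag the index swap: the paper writes $\zeta = k_1/(2\sqrt{k_2})$, but the system $\ddot{\xi}_1 + k_2\dot{\xi}_1 + k_1\xi_1 = 0$ in fact gives $\zeta = k_2/(2\sqrt{k_1})$, so the critical-damping condition should read $k_2 = 2\sqrt{k_1}$, consistent with Proposition~\ref{prop1} and the later relation $k_1 = k_2^2/4$.
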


\begin{proof}
    One can easily prove, noticing that the damping ratio is $\zeta = \dfrac{k_1}{2\sqrt{k_2}}$ for a second order linear system given by Eq.~\eqref{nuisance2}.
\end{proof}

The determination of the characteristics of the equilibrium point and system dynamics is necessary but not sufficient. Rapidly varying flight states across a wide spectrum of operating conditions lead to the need for adaptivity of the gains $k_1$ and $k_2$ to ensure nuisance-free commands under any circumstances. Therefore, the sliding manifolds must be designed adaptively. As observed in Fig.~\ref{phasePortrait}, the linear sliding manifolds are functions of the gains $k_1$ and $k_2$, which then define an adaptive sliding manifold, as represented in Eq.~\eqref{nuisance3}.

\begin{equation}
\label{nuisance3}
    \mathcal{S} = \{(\kappa(\bm{x}), \bm{\xi}) |  \sigma(\kappa(\bm{x}),\bm{\xi}) = 0\},
\end{equation}
where sliding manifold $\sigma(\kappa(\bm{x}),\bm{\xi})$ is
\begin{equation}
\label{nuisance4}
    \sigma(\kappa(\bm{x}),\bm{\xi}) = \xi_2 + \mu \xi_1,
\end{equation}
where $\bm{x} = [\phi \hspace{0.15cm} \theta \hspace{0.15cm} V_T]^\mathrm{T}$ and $\mu = k_1/k_2$ is the adaptive design parameter. The given definition of the adaptive sliding manifold implies that the sliding manifolds are principally functions of the gain $\kappa$, and the gain $\kappa$ is a function of the governing flight states, i.e. $\phi$, $\theta$, and $V_T$. Consequently, the adaptivity of the gain $\kappa$ with respect to $\bm{x} = [\phi \hspace{0.15cm} \theta \hspace{0.15cm} V_T]^\mathrm{T}$ and $\mu = k_1/k_2$ directly leads to the adaptivity of sliding manifolds. For this purpose, the adaptive design of gain $\kappa$ is performed through an optimization framework, which is presented in the proceeding section.

\subsubsection{Optimization Framework}
\label{optimizationFramework}

An optimization problem is set up to calculate the gain pair $k_1$ and $k_2$ in accordance with the timely and aggressive intervention requirements. To obtain the reference command shape, represented in Fig.~\ref{fig:nuisanceVSnuisanceFree}, the principal objectives are determined as follows: \textbf{(1)} the integral of the pitch rate command within its applied time interval, denoted as $[t_0 \hspace{0.15cm} t_f]$, \textbf{(2)} the maximum amplitude of the pitch rate command within its applied time interval, denoted as $[t_0 \hspace{0.15cm} t_f]$, and \textbf{(3)} the norm of the difference between the minimum altitude of the aircraft and the DTED with buffer height, i.e., $h_\text{DTED} + h_\text{buff}$. Thereby, the optimization problem can be described as a multi-objective optimization problem, where the compact objective function is given by Eq.~\eqref{opt1}.

\begin{equation}
\label{opt1}
\begin{aligned}
    J_T &= W_1 \underbrace{\Big[-\int^{t_f}_{t_0}q_\text{GCAS}(t)dt\Big]}_{\substack{J_1}} + W_2 \underbrace{\big[-\max(q_\text{GCAS}(t))\big]}_{\substack{J_2}} + W_3 \underbrace{|\min \big(h(t)\big) - (h_\text{buff} + h_\text{DTED})|}_{\substack{J_3}},
\end{aligned}
\end{equation}
where $W \overset{\Delta}{=} \{W_1, \hspace{0.15cm} W_2, \hspace{0.15cm} W_3\}$ represents the positive weight coefficients corresponding to each objective element. The first element, $J_1$, ensures the continuity of the Auto-GCAS command, eliminating possible discrete commands that could be regarded as nuisance, as shown in Fig.~\ref{fig:nuisanceVSnuisanceFree}. The second element, $J_2$, enhances the aggressiveness of the intervention, while the third element ensures timeliness, preventing interventions that are too early or too late. Consequently, minimizing the compact objective function is expected to produce a timely and aggressive command that is not regarded as a nuisance.

\begin{remark}
    If Proposition~\ref{prop1} and \ref{prop2} are substituted, the optimization problem is reduced to a single variable multi-objective optimization, where $k_1 = k_2^2/4$ and $k_1 \in [4, \infty)$.
\end{remark}

Finally, the optimization set-up is described as given by Eq.~\eqref{opt2}.

\begin{equation}
\label{opt2}
    \begin{aligned}
    k^\star_1,k^\star_2 = \operatorname*{argmin}_{k_1, k_2 \in \mathbb{R}_{>0}} \quad & \sum^{3}_{i = 1} W_i \hat{J}_i \\
    \textrm{s.t.} \quad & k_1 - k_2^2/4 = 0 \\
                  \quad & k_1 \geq 4.
    \end{aligned}
\end{equation}
where $\hat{J}$ represent the normalized objective function between $0$ and $1$.

\section{Flight Envelope Protection System Design}
\label{flightEnvelopeProtection}

The flight envelope protection serves as an additional safety layer to ensure “do no harm" principle. For this purpose, the protection algorithms for the angle of attack, load factor, and bank angle are designed using CBF.

\subsection{Angle of Attack/Load Factor Protection}
\label{AoALoadFactor}

The angle of attack and load factor protection strategy modifies the pitch rate command from Auto-GCAS, generating a safe pitch rate command to ensure compliance with the maximum allowable limits. Then, design a barrier function as presented in Eq.~\eqref{barrierAlpha1}.

\begin{equation}
\label{barrierAlpha1}
    b(\alpha) = \alpha_\text{limit} - \alpha,
\end{equation}
where $\alpha_\text{limit} \in \mathbb{R}$. It is obvious that $b(\alpha) > 0, \forall \alpha \in \mathbb{R}_{< \alpha_\text{limit}}$, and $b(\alpha) = 0 \iff \alpha = \alpha_\text{limit}$. Thus, the time derivative of the barrier function is $\dot{b}(\alpha) = -\dot{\alpha}$. For convenience, the load factor limit can be expressed as an angle of attack limit (due to the bank-to-level requirement, i.e., $\phi = 0^\circ$), as presented in Eq.~\eqref{barrierAlpha2}.

\begin{equation}
\label{barrierAlpha2}
    \alpha^{n_z}_\text{limit} = \dfrac{n_{z_\text{limit}} mg}{\bar{q}_\infty S C_{z_\alpha}},
\end{equation}
where $C_{z_\alpha}$ denotes the z-axis force coefficient derivative with respect to the angle of attack. To ensure safety, the equivalent angle of attack constraint is selected as the more restrictive value between the nominal stall angle $\alpha_\text{limit}$ and the load factor-based limit $\alpha^{n_z}_\text{limit}$. The primary limit for the angle of attack is considered to be the stall angle of attack, $\alpha_\text{stall}$; therefore, the equivalent angle of attack limit can be expressed as given in Eq.~\eqref{barrierAlpha3}.

\begin{equation}
\label{barrierAlpha3}
    \alpha_\text{limit} = \min (\alpha_\text{stall}, \alpha^{n_z}_\text{limit}).
\end{equation}

Remember that the angle of attack dynamics ($\dot{\alpha}$) and the components ($f(\alpha)$ and $g(\alpha)$) that reveal the pitch rate are presented in Eq.~\eqref{barrierH11} and Eq.~\eqref{barrierH13}, respectively. Thus, the CBF constraint for angle of attack and load factor protection is straightforwardly presented in Eq.~\eqref{barrierAlpha4}.

\begin{equation}
\label{barrierAlpha4}
    \underbrace{-f(\alpha) - g(\alpha)q}_{\substack{\mathcal{L}_f b(\alpha) + \mathcal{L}_g b(\alpha) q_\text{cmd}}} + \gamma_\alpha b(\alpha) \geq 0,
\end{equation}
where $\gamma_\alpha \in \mathbb{R}_{>0}$ is the design parameter to be chosen properly. As a consequence, the final form of the angle of attack and load factor protection formulation for generating a pitch rate command is presented in Eq.~\eqref{barrierAlpha5}.

\begin{equation}
\label{barrierAlpha5}
    \begin{aligned}
    u^\star = \operatorname*{argmin}_{q_\text{cmd} \in \mathbb{R}} \quad & \dfrac{1}{2}\big(q_\text{cmd} - q_\text{GCAS}\big)^2 \\
    \textrm{s.t.} \quad & -f(\alpha) - g(\alpha) q_\text{cmd} + \gamma_\alpha b(\alpha) \geq 0 \\
                \quad & q_\text{min} \leq q_\text{cmd} \leq q_\text{max}. \\
    \end{aligned}
\end{equation}

This formulation enables supervision of the Auto-GCAS pitch rate commands to ensure that the allowable safe limits are not violated and generates a safe pitch rate command to the aircraft.

\subsection{Bank Angle Protection}
\label{bankAngle}

The bank angle protection serves to perform the bank-to-level maneuver during recovery; therefore, it can also be regarded as part of the Auto-GCAS structure, with the main strategy being to redesign the pilot's roll rate command, $p_\text{pilot}$. Moreover, it is activated only when $q_\text{pilot} - q_\text{GCAS} \neq 0$, meaning that if Auto-GCAS generates a pitch rate different from the pilot's command, the bank-to-level maneuver is conducted simultaneously. A barrier function is then designed as presented in Eq.~\eqref{barrierPhi1}. 

\begin{equation}
\label{barrierPhi1}
    b(\phi) = \phi_\text{limit} - |\phi|,
\end{equation}
where $\phi_\text{limit} \in \mathbb{R}$. It is obvious that $b(\phi) > 0, \forall \phi \in \mathbb{R}_{< \phi_\text{limit}}$, and $b(\phi) = 0 \iff \phi = \phi_\text{limit}$. Thus, the time derivative of the barrier function is $\dot{b}(\phi) = -\text{sgn}(\phi)\dot{\phi}$. Additionally, since the main function of the bank angle protection is to conduct a bank-to-level maneuver, the limit of the bank angle, $\phi_\text{limit}$, is simply set to $0^\circ$.

At this point, it is necessary to decompose the bank angle dynamics, introduced in Eq.~\eqref{flightDynamicsEq}, into the form $\dot{\phi} = f(\phi) + g(\phi)p$ to reveal the roll rate. Since the roll rate appears explicitly, it is straightforward to describe the necessary components $f(\phi)$ and $g(\phi)$, as presented in Eq.~\eqref{barrierPhi2}.

\begin{equation}
\label{barrierPhi2}
\begin{aligned}
    &f(\phi) = t_\theta(qs_\phi + rc_\phi), \\
    &g(\phi) = 1.
\end{aligned}
\end{equation}

Thus, the CBF constraint for the bank angle protection is straightforwardly presented in Eq.~\eqref{barrierPhi3}. 

\begin{equation}
\label{barrierPhi3}
    \underbrace{-\text{sgn}(\phi)[f(\phi) + g(\phi)p]}_{\substack{\mathcal{L}_f b(\phi) + \mathcal{L}_g b(\phi) p_\text{GCAS}}} + \gamma_\phi b(\phi) \geq 0,
\end{equation}
where $\gamma_\phi \in \mathbb{R}_{>0}$ is the design parameter to be chosen properly. Furthermore, since $\text{sgn}(\phi)$ is a discontinuous function, an approximation of $\text{sgn}(\phi) \approx \tanh\Big(\dfrac{\phi}{\epsilon}\Big)$, where $\epsilon \in \mathbb{R}_{>0}$ is a negligible constant, can be evaluated to eliminate the discontinuity. As a consequence, the final form of the bank angle protection formulation for generating a roll rate command is presented in Eq.~\eqref{barrierPhi4}.

\begin{equation}
\label{barrierPhi4}
    \begin{aligned}
    u^\star = \operatorname*{argmin}_{p_\text{GCAS} \in \mathbb{R}} \quad & \dfrac{1}{2}\big(p_\text{GCAS} - p_\text{pilot}\big)^2 \\
    \textrm{s.t.} \quad & -\tanh\Big(\dfrac{\phi}{\epsilon}\Big)[f(\phi) + g(\phi) p_\text{GCAS}] + \gamma_\phi b(\phi) \geq 0 \\
                \quad & p_\text{min} \leq p_\text{GCAS} \leq p_\text{max}. \\
    \end{aligned}
\end{equation}

The constructed formulation enables the Auto-GCAS to generate a roll rate command ($p_\text{GCAS}$) that closely follows the roll rate command of pilot ($p_\text{pilot}$) while adhering to the constraint for ground collision avoidance, i.e. $\phi = 0^\circ$. Additionally, the generated roll rate command, $p_\text{GCAS}$, must remain within the interval $[p_\text{min} \hspace{0.15cm} p_\text{max}]$, considering the admissible and allowable roll rate limits depending on the aircraft. Since there does not exist an additional layer for the bank angle protection, the command $p_\text{GCAS}$ is equivalent to the command $p_\text{cmd}$. 

\begin{remark}
    It is noteworthy that the bank angle must be within the interval, $-\pi \leq \phi \leq \pi$, by definition.
\end{remark}

The overall proposed architecture is depicted in Fig.~\ref{generalFramework} for clarity.

\begin{figure*}[hbt!]
\centering
\includegraphics[width=\textwidth]{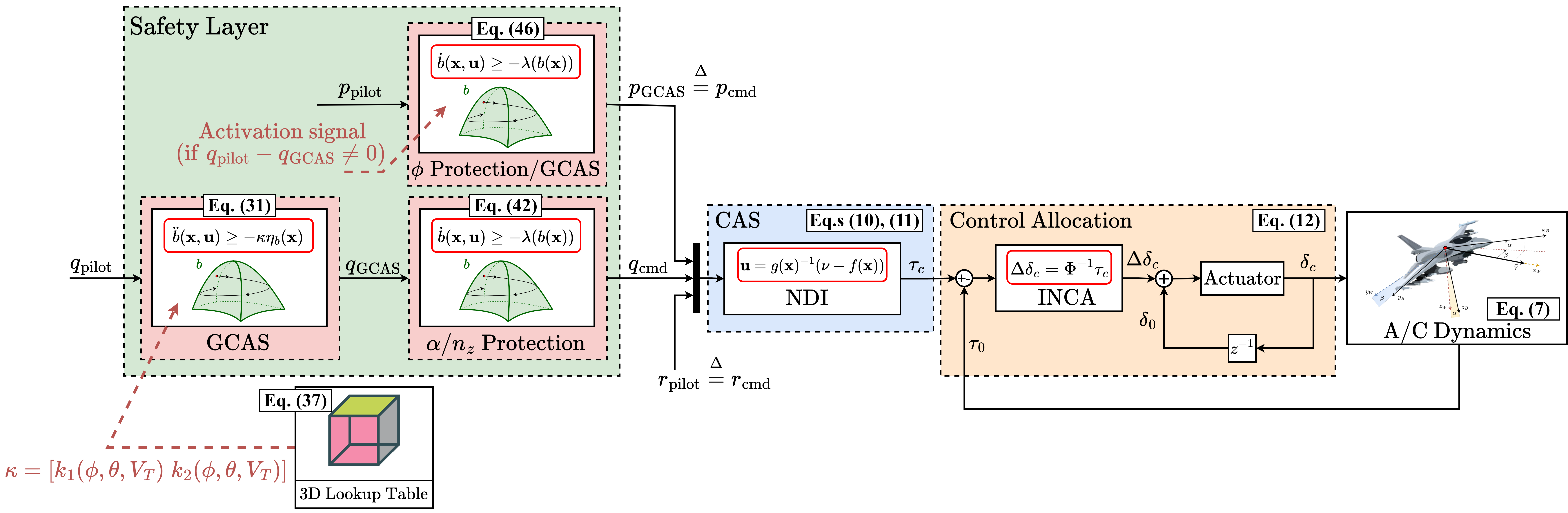}
\caption{General framework of the proposed method: \textbf{(1)} Safety layer including GCAS and FEP, \textbf{(2)} CAS including NDI, \textbf{(3)} Control allocation, and \textbf{(4)} A/C dynamics.}
\label{generalFramework}
\end{figure*}

The pilot commands \( p_\text{pilot} \), \( q_\text{pilot} \), and \( r_\text{pilot} \), with roll and pitch rate commands subject to GCAS constraints. The pitch rate command is adjusted by the ECBF constraint, and \( q_\text{GCAS} \) is generated if needed. This command is also limited by angle of attack and load factor protection, generating \( q_\text{cmd} \) when required. Bank angle protection initiates a bank-to-level maneuver, generating \( p_\text{GCAS} \) if \( q_\text{pilot} - q_\text{GCAS} \neq 0 \), indicating a recovery maneuver initiated; otherwise, the pilot's roll rate command, \( p_\text{pilot} \), is passed through.

\section{Results}
\label{results}

In order to calculate the adaptive ECBF gain vector, the optimization is performed under various conditions by meshing $\phi$, $\theta$, and $V_T$ within the intervals $[-150^\circ \hspace{0.15cm} 150^\circ]$ with an increment of $\Delta\phi = 50^\circ$, $[-60^\circ \hspace{0.15cm} -10^\circ]$ with an increment of $\Delta\theta = 10^\circ$, and $[200 \hspace{0.15cm} 350]$ with an increment of $\Delta V_T = 30$m/s, respectively. The other flight states are randomly assigned to ensure the aircraft dives. Finally, the optimization is carried out at each design point, assuming the ground is flat and at zero level, i.e., $h_\text{DTED} = 0$m while keeping $h_\text{buff} = 100$~m, and the calculated gains $k_1$ and $k_2$ are scheduled as functions of $\phi$, $\theta$, and $V_T$. Moreover, the proposed Auto-GCAS design has been evaluated through three different assessments, \textbf{(1)} terrain collision avoidance scenarios, \textbf{(2)} an authority-sharing scenario, and \textbf{(3)} Monte Carlo simulations. Note that, for all assessments, the buffer height is set to 100 meters, and the center of gravity of the aircraft is set to $35\%\bar{c}$, which makes the aircraft statically unstable in the pitch axis.

\subsection{Terrain Collision Avoidance Assessments}
\label{terrainCollisionResults}

Two distinct terrain collision avoidance scenarios are conceptualized: one at low altitude with randomized state variables and another at high altitude with randomized states variables. The term “randomized state variables" encompasses aerodynamic angles, Euler angles, angular rates, and true velocity, with altitude being explicitly excluded. A default terrain model, synthesized using MATLAB\textsuperscript{\tiny\textregistered}'s \textit{peaks} function, is used to generate a smooth, synthetic terrain surface for visualization and conceptual validation of the barrier formulation; it does not represent real-world terrain. Furthermore, it is assumed that the pilot is incapacitated (e.g. experiencing blackout), resulting in the absence of control inputs, such that angular rate commands are maintained at $0^\circ/\text{s}$. Finally, a terrain scanning pattern should be designed to account for non-flat, mountainous terrain. However, since developing such a method is outside the scope of this study, a default rectangular scan pattern is employed, extending 750 meters ahead of the aircraft and 150 meters to each side (left and right).

\subsubsection{Scenario-1: Low Altitude}

In this scenario, the aircraft begins its flight with randomized state variables at an altitude of 1000 meters. The relevant state trajectories are depicted in Fig.~\ref{statesTerrainMan1}. 

\begin{figure}[hbt!]
\centering
\includegraphics[width=0.9\textwidth]{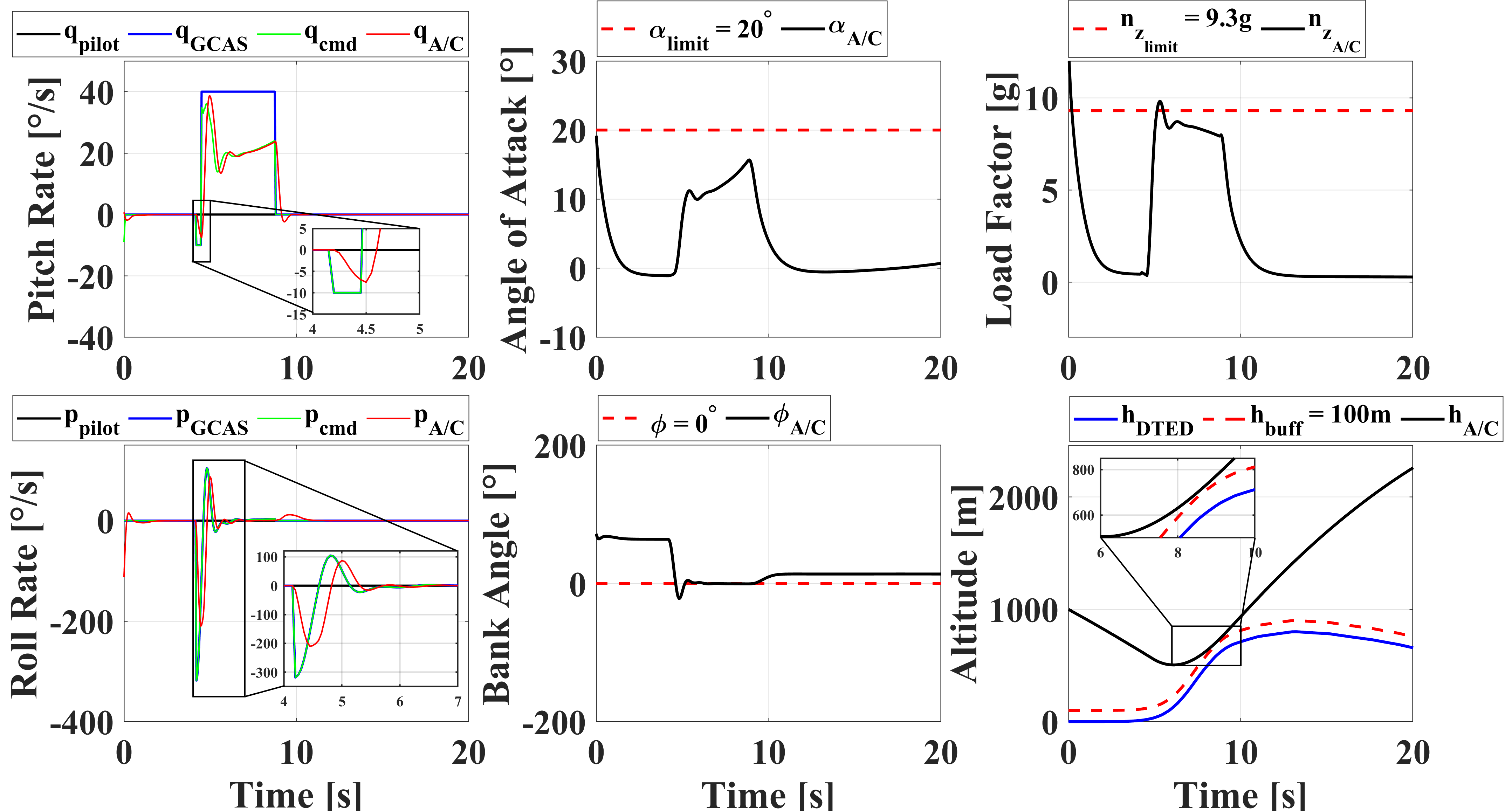}
\caption{Corresponding state trajectories: \textbf{“pilot”} = pilot input, \textbf{“GCAS”} = Auto-GCAS command, \textbf{“cmd”} = resultant command, \textbf{“A/C”} = aircraft response.}
\label{statesTerrainMan1}
\end{figure}

Given the randomized initial conditions, the aircraft's states at the start are consistent with these specifications. Subsequently, the aircraft stabilizes itself under equilibrium conditions, driven by the $0^\circ/\text{s}$ angular rate commands. For a certain period, no intervention from the Auto-GCAS is observed, indicating that it waits for the most appropriate moment to act. Following the fourth second of the dive, the Auto-GCAS engages, commanding the aircraft to bank-to-level and execute a pull-up maneuver. The Auto-GCAS commands are continuous and at maximum amplitude. However, under the supervision of the FEP algorithm, these commands are subjected to flight envelope constraints, specifically angle of attack limit $\alpha_\text{limit}$, and load factor limit $n_{z_\text{limit}}$. To prevent exceeding the allowable load factor, the commanded pitch rate is reduced accordingly. Additionally, it is observed that the aircraft accurately tracks the adjusted pitch rate command. At the onset of Auto-GCAS intervention, both pitch and roll maneuvers are initiated. This results in a bank-to-level maneuver, aligning the bank angle $\phi$ to $0^\circ$. By the conclusion of the recovery maneuver, the aircraft clears the buffer height, demonstrating that the recovery commanded by Auto-GCAS was both timely and aggressive. Once the recovery is successfully achieved, Auto-GCAS ceases its commands, and the angular rate commands return to their original values $0^\circ/\text{s}$. The corresponding control surface deflections and control allocation objective histories are presented in Fig.~\ref{controlDeflectionsTerrainMan1}.

\begin{figure}[hbt!]
\centering
\includegraphics[width=0.9\textwidth]{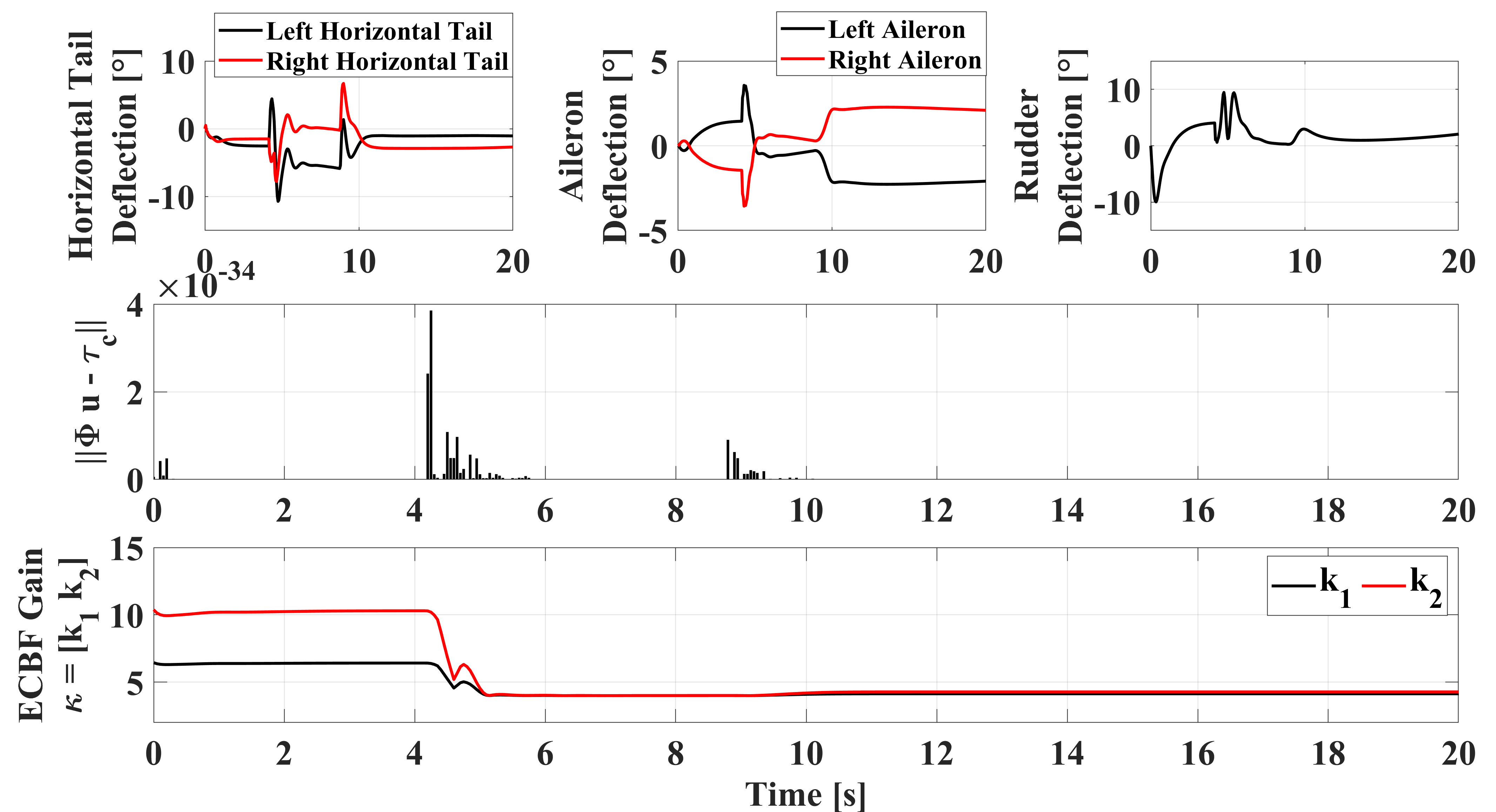}
\caption{Corresponding control surface deflections, control allocation objective, $||\phi u - \tau_c||$, and adaptive ECBF control gain vector.}
\label{controlDeflectionsTerrainMan1}
\end{figure}

The most significant outcome is the satisfaction of the control moment coefficients through allocation on the control surfaces, meaning that the commanded control moment coefficients have been achieved. Finally, a 3D illustration of the scenario is depicted in Fig.~\ref{3DTerrainMan1}.

\begin{figure}[hbt!]
\centering
\includegraphics[width=0.9\textwidth]{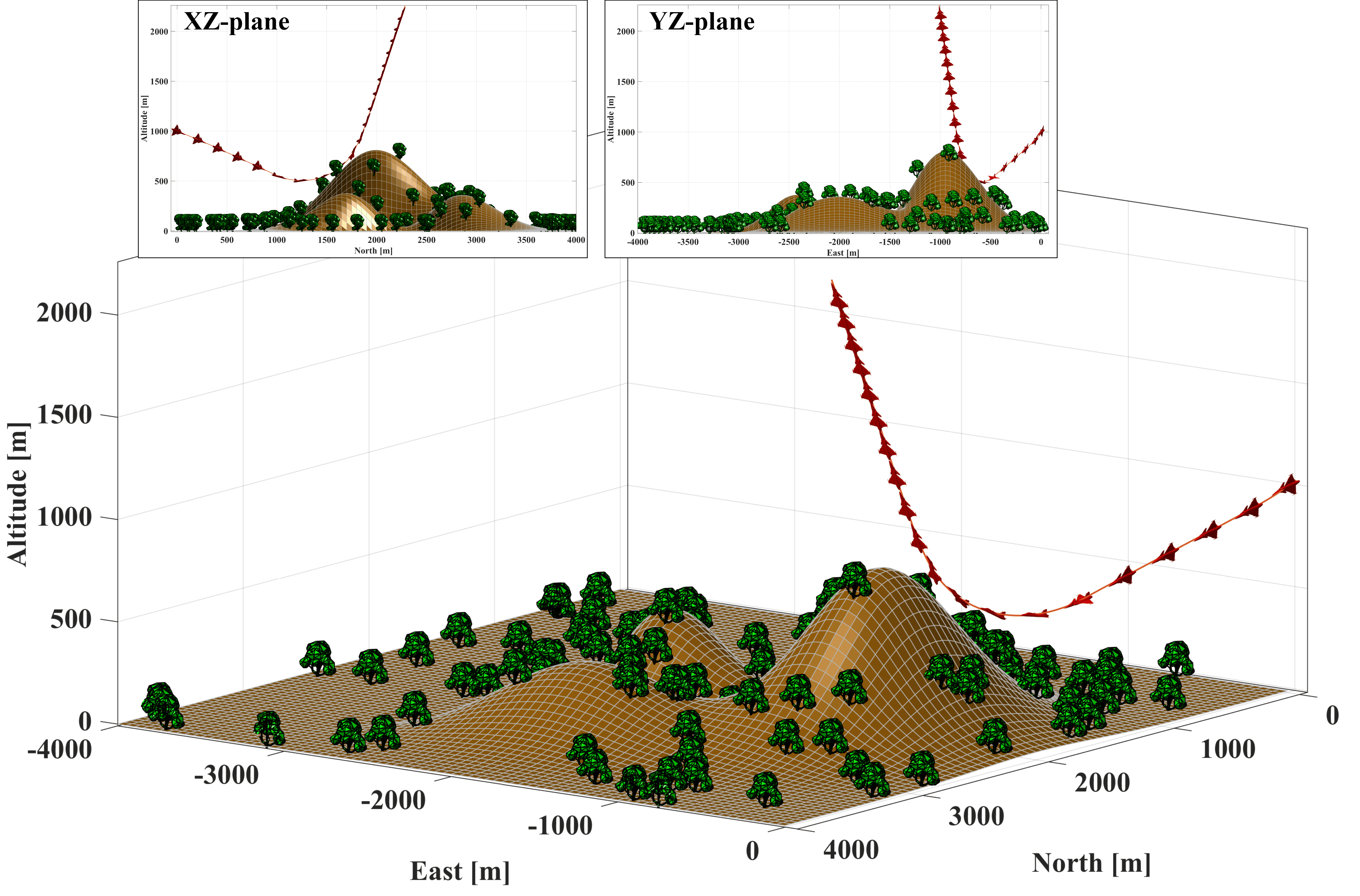}
\caption{3D illustration of Scenario-1: from XZ plane, YZ plane, and isometric view.}
\label{3DTerrainMan1}
\end{figure}

\subsubsection{Scenario-2: High Altitude}

In this scenario, the aircraft begins its flight with randomized state variables at an altitude of 3500 meters. The relevant state trajectories are depicted in Fig.~\ref{statesTerrainMan2}. 

\begin{figure}[hbt!]
\centering
\includegraphics[width=0.9\textwidth]{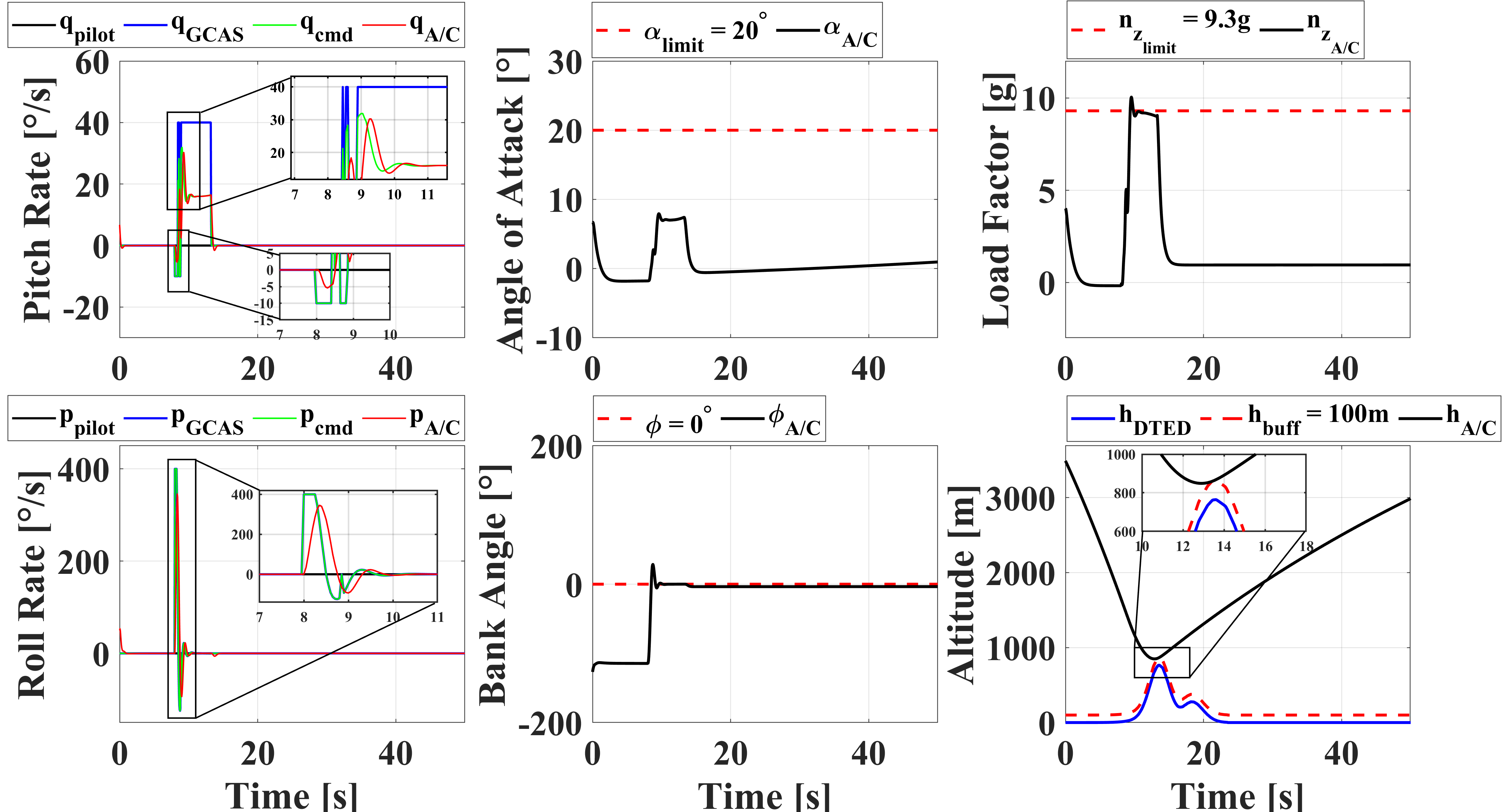}
\caption{Corresponding state trajectories: \textbf{“pilot”} = pilot input, \textbf{“GCAS”} = Auto-GCAS command, \textbf{“cmd”} = resultant command, \textbf{“A/C”} = aircraft response.}
\label{statesTerrainMan2}
\end{figure}

Once again, for a certain amount of time, the intervention of Auto-GCAS is not observable. Approximately at the \nth{8} second of the dive, Auto-GCAS intervenes, commanding the aircraft to bank-to-level and execute a pull-up maneuver. The Auto-GCAS pitch rate command is nearly continuous and at maximum amplitude. However, such a high-amplitude command is not permitted by FEP due to the angle of attack limit, $\alpha_\text{limit}$, and the load factor limit, $n_{z_\text{limit}}$. It is observable that, for a certain period, the aircraft achieves maximum vertical acceleration, but any excess is not allowed. Due to the simultaneous initiation of both pitching and rolling during the recovery maneuver, the bank angle is rapidly set to $0^\circ$. Seemingly, not only is the pitch rate applied at its maximum amplitude, but also the roll rate. At the end of the recovery maneuver, the aircraft clears the buffer height for this scenario as well, indicating that the commanded recovery maneuver by Auto-GCAS is once again timely and aggressive. Once the recovery is successfully completed, Auto-GCAS ceases commanding, and the commanded angular rates return to the original values, i.e., $0^\circ/\text{s}$. The corresponding control surface deflections and control allocation objective histories are presented in Fig.~\ref{controlDeflectionsTerrainMan2}.

\begin{figure}[hbt!]
\centering
\includegraphics[width=0.9\textwidth]{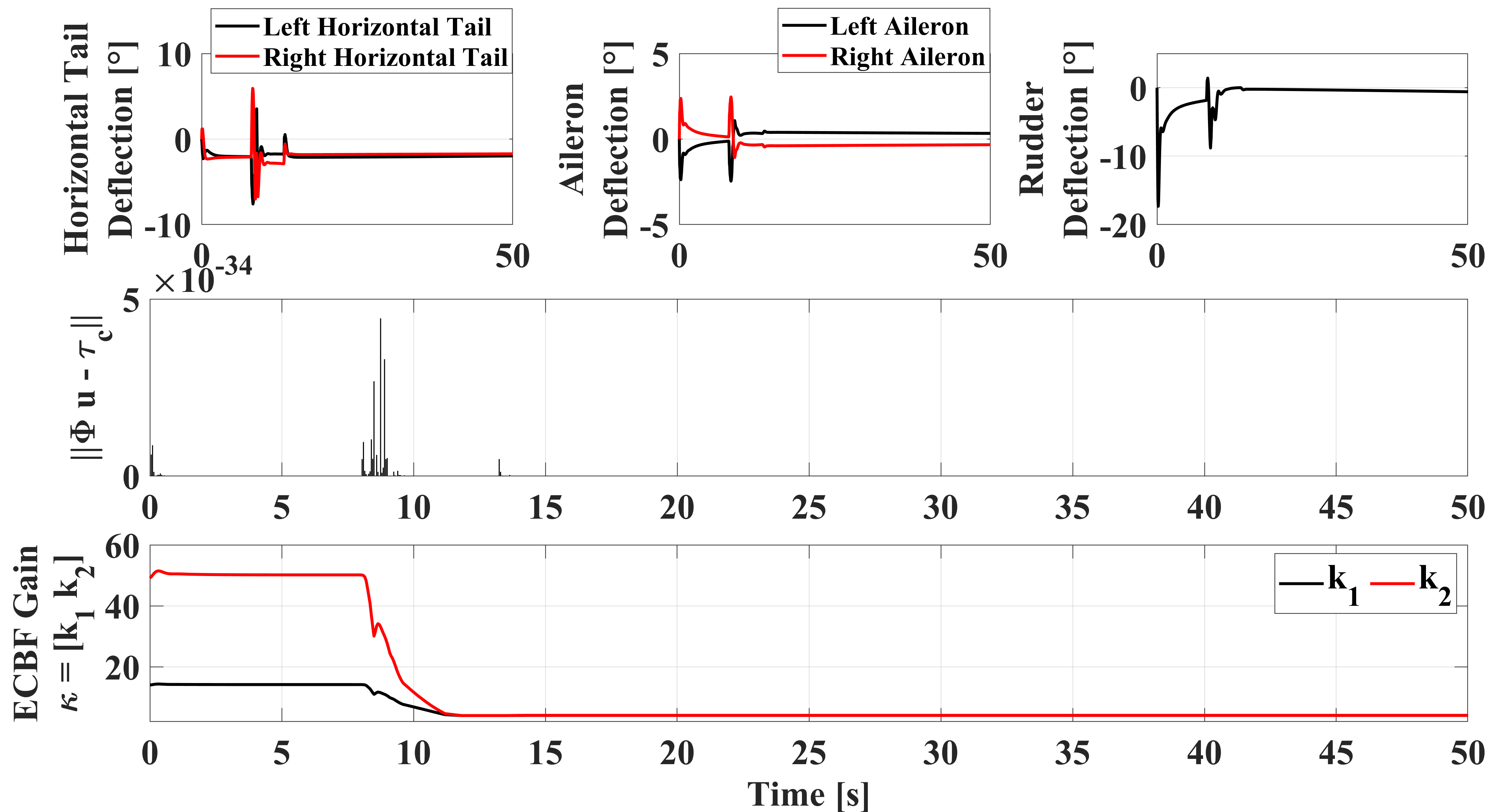}
\caption{Corresponding control surface deflections, control allocation objective, $||\phi u - \tau_c||$, and adaptive ECBF control gain vector}
\label{controlDeflectionsTerrainMan2}
\end{figure}

The history of the control allocation objective indicates that, once again, the commanded control moment coefficients have been achieved. Therefore, the baseline flight control law also works properly and effectively. Finally, a 3D illustration of the scenario is depicted in Fig.~\ref{3DTerrainMan2}.

\begin{figure}[hbt!]
\centering
\includegraphics[width=0.7\textwidth]{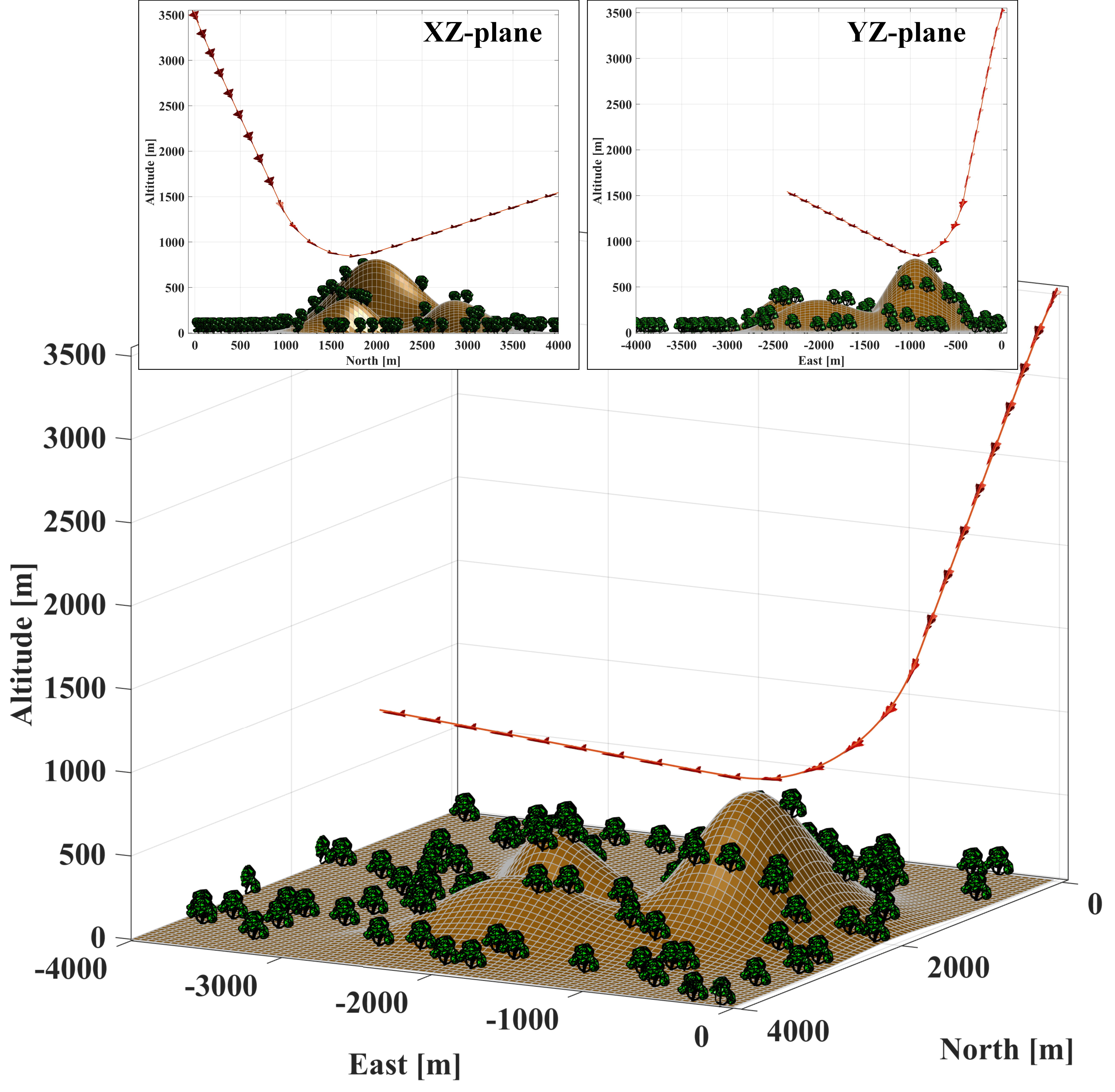}
\caption{3D illustration of Scenario-2: from XZ plane, YZ plane, and isometric view.}
\label{3DTerrainMan2}
\end{figure}

\subsection{Authority-sharing Assessment}
\label{authoritySharingResults}

This assessment should be evaluated within a different context than the previous assessments. In this scenario, the pilot remains capacitated and commands arbitrary angular rates at different amplitudes at various time instants over a certain time period. However, Auto-GCAS remains activated, and the pilot is not allowed to violate the buffer height. The key point to consider is the extent to which Auto-GCAS intervenes with the pilot's commands and when it ceases the intervention. In this regard, the scenario is conceptualized as follows: the aircraft dives under randomized conditions at an altitude of 3500 meters. After losing a certain amount of altitude, the pilot commands rolling at a specific instant, pitch-roll coupling at another specific instant, and only pitching at yet another specific instant. The relevant state trajectories are depicted in Fig.~\ref{authoritySharingStates}.

\begin{figure}[hbt!]
\centering
\includegraphics[width=0.9\textwidth]{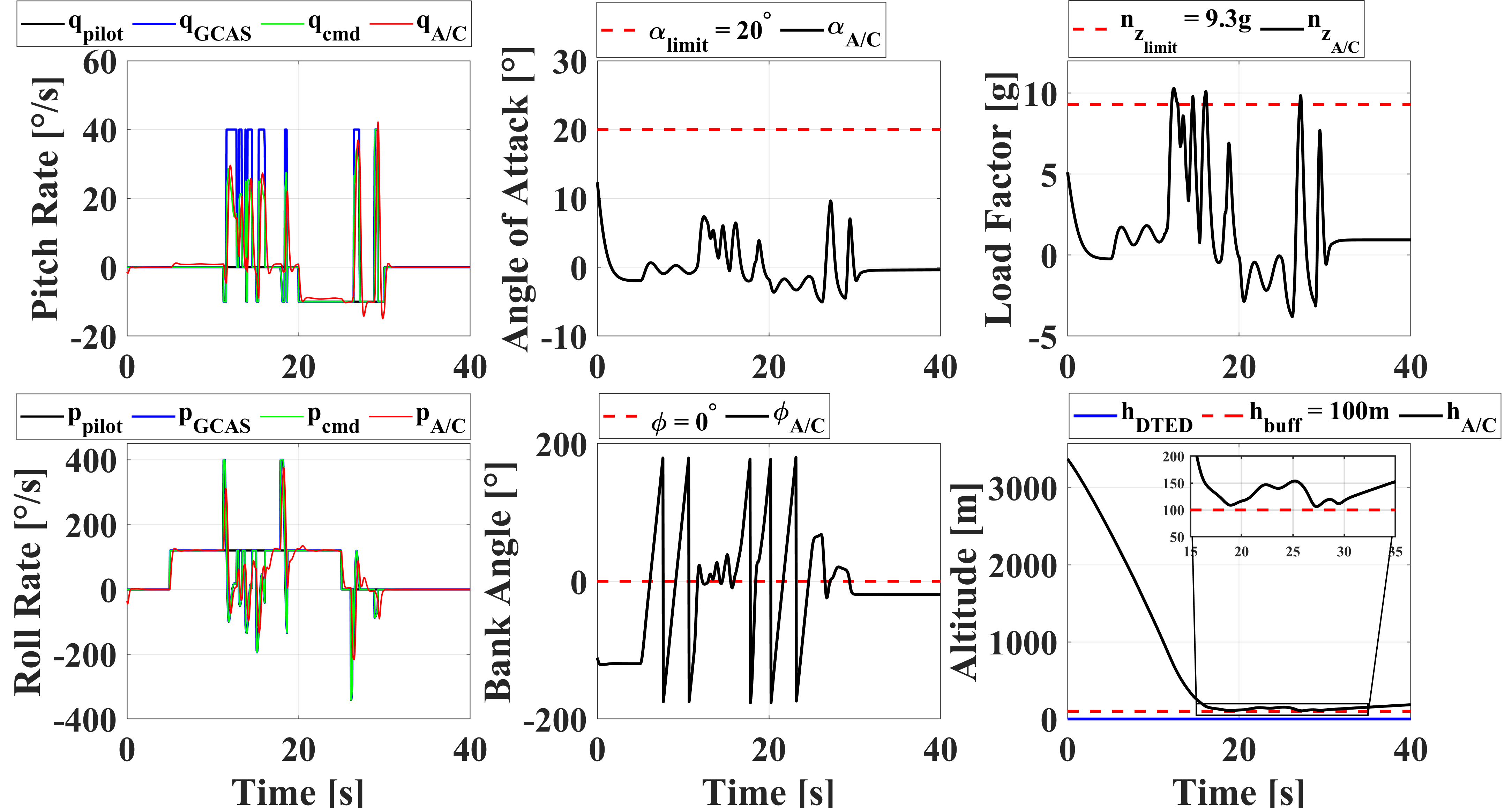}
\caption{Corresponding state trajectories: \textbf{“pilot”} = pilot input, \textbf{“GCAS”} = Auto-GCAS command, \textbf{“cmd”} = resultant command, \textbf{“A/C”} = aircraft response.}
\label{authoritySharingStates}
\end{figure}

The states of this scenario should be interpreted in a different context. Primarily, since the pilot persistently commands the aircraft to dive through the buffer height, the expected nuisance-free intervention cannot be anticipated. Consequently, the intervention commands of Auto-GCAS differ from those depicted in Fig.~\ref{fig:nuisanceVSnuisanceFree}. The initial Auto-GCAS commands, prior to reaching the buffer height, involve pull-up and bank-to-level maneuvers. The pilot's roll command is adjusted to achieve wings-level, and a pitch-up command is initiated simultaneously. However, the other safety layer, i.e. FEP, prevents the load factor from exceeding its limit, thus modifying the Auto-GCAS command. Nevertheless, a significant aspect of this response is the chattering-like behavior in the Auto-GCAS pitch rate commands. It is observable that the aircraft clears the buffer height due to the initial actions of Auto-GCAS. However, the pilot continues to command both push-over and roll. In such a scenario, the primary responsibility of Auto-GCAS is to protect the aircraft from a collision, making the expectation of a nuisance-free intervention contrary to the rationale. Furthermore, if the altitude response of the aircraft is examined in detail, the Auto-GCAS commands are issued at the correct time and at maximum amplitude, indicating a timely and aggressive intervention. As soon as the risk of violating the buffer height is mitigated, Auto-GCAS ceases its intervention, and the pilot's commands are once again permitted. Therefore, a collaboration between the pilot and Auto-GCAS is evident, provided that safety is ensured. The corresponding control surface deflections and control allocation objective histories are presented in Fig.~\ref{controlDeflectionsAuthoritySharing}.

\begin{figure}[hbt!]
\centering
\includegraphics[width=0.9\textwidth]{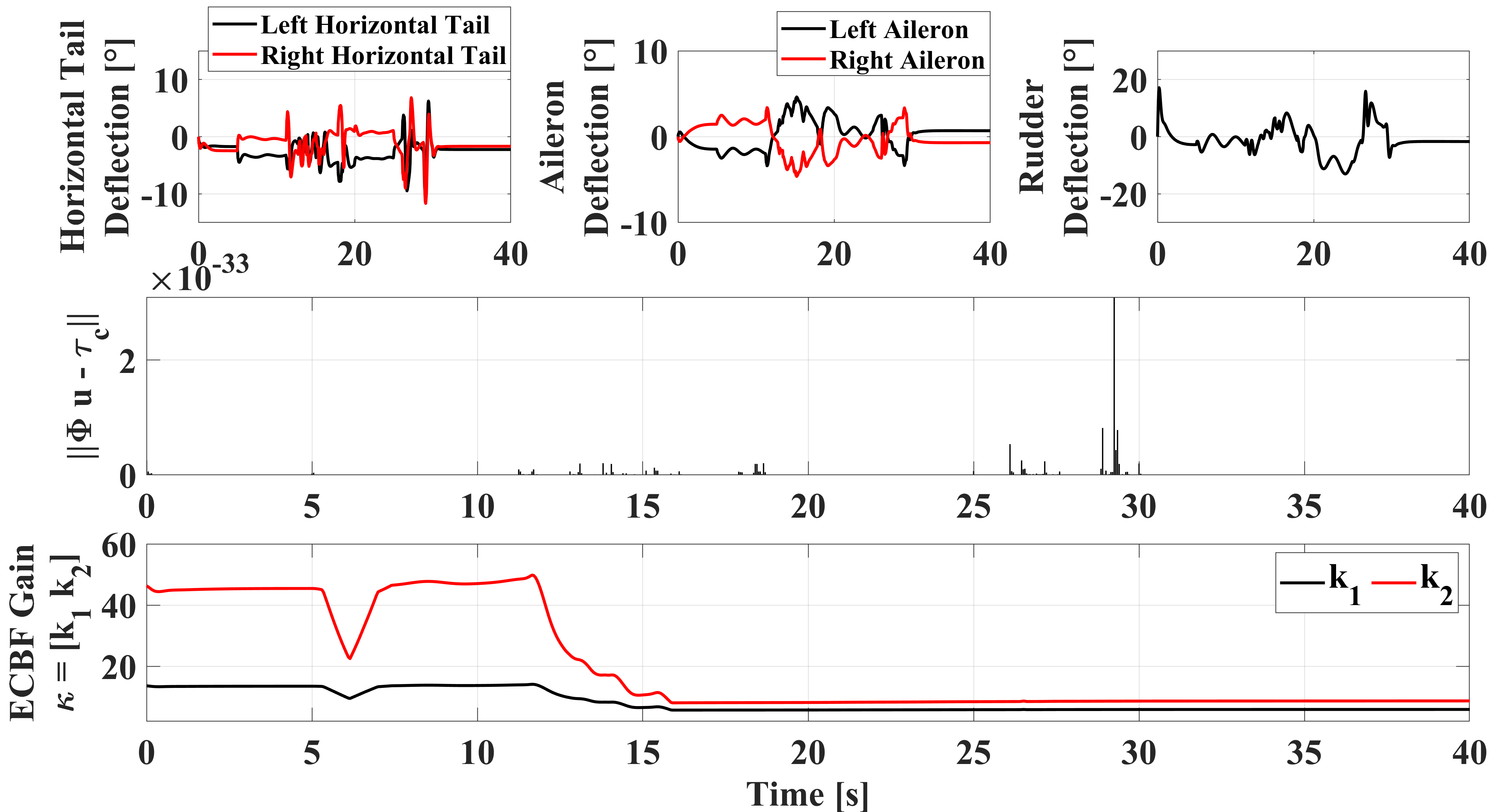}
\caption{Corresponding control surface deflections, control allocation objective, $||\phi u - \tau_c||$, and adaptive ECBF control gain vector}
\label{controlDeflectionsAuthoritySharing}
\end{figure}

The history of the control allocation objective indicates that, once again, the commanded control moment coefficients have been achieved. Therefore, the baseline flight control law also functions properly and effectively for this scenario. Finally, a 3D illustration of the scenario is depicted in Fig.~\ref{3DTerrainAuthoritySharing}.

\begin{figure}[hbt!]
\centering
\includegraphics[width=0.9\textwidth]{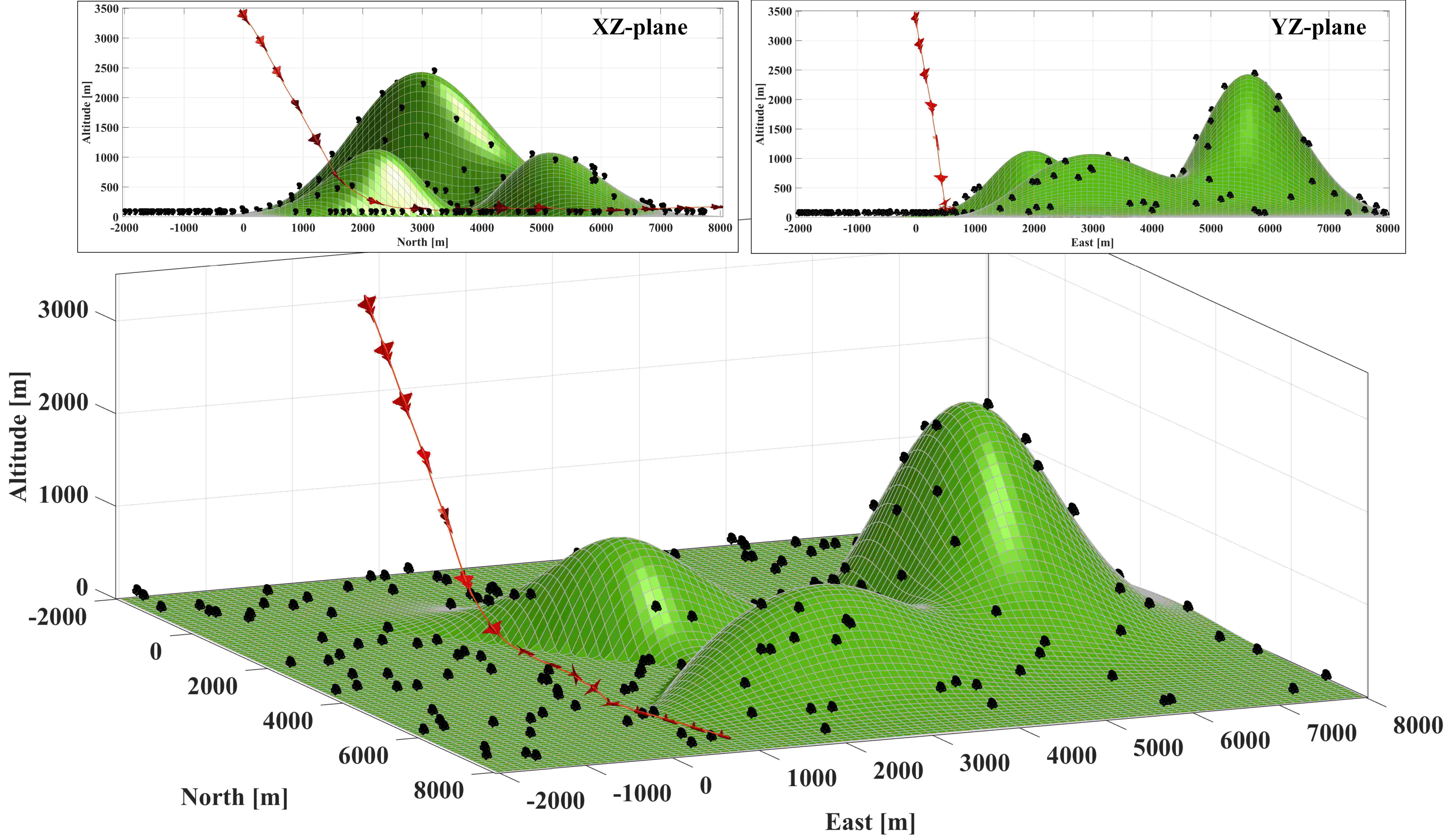}
\caption{3D illustration of Scenario-2: from XZ plane, YZ plane, and isometric view.}
\label{3DTerrainAuthoritySharing}
\end{figure}

\subsection{Monte Carlo Simulations}
\label{monteCarloResults}

Monte Carlo simulations are conducted under $850$ randomized initial conditions to investigate the efficacy of the proposed method. While randomized initial conditions are used to evaluate general robustness, additional tailored scenarios\textemdash such as steep dives and low-altitude banked turns\textemdash has already been presented to challenge the controller under known high-risk conditions. In this assessment, since the DTED is assumed to be flat and at ground level (0 meters), the previously mentioned terrain scan pattern has not been utilized. All the states of the aircraft are randomized, as depicted in Fig.~\ref{initialConditionSpace}.

\begin{figure}[hbt!]
\centering
\includegraphics[width=0.8\textwidth]{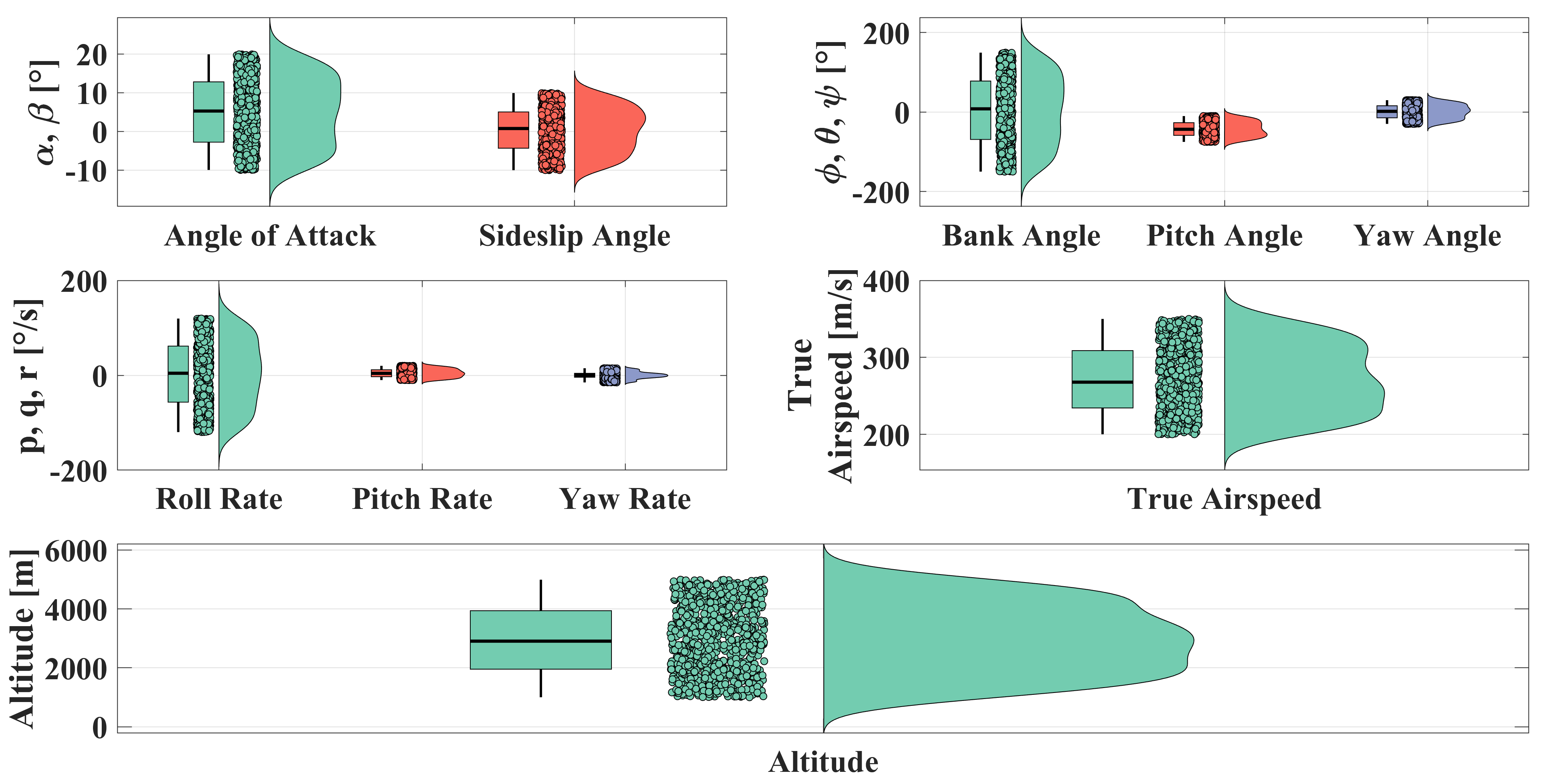}
\caption{Initial condition space for Monte Carlo simulations.}
\label{initialConditionSpace}
\end{figure}

The angle of attack is randomized within the interval \( [-10^\circ, \hspace{0.15cm} 20^\circ] \), sideslip within \( [-10^\circ, \hspace{0.15cm} 10^\circ] \), bank angle within \( [-150^\circ, \hspace{0.15cm} 150^\circ] \), pitch angle within \( [-75^\circ, \hspace{0.15cm} -10^\circ] \), yaw angle within \( [-30^\circ, \hspace{0.15cm} 30^\circ] \), roll rate within \( [-120^\circ/\text{s}, \hspace{0.15cm} 120^\circ/\text{s}] \), pitch rate within \( [-10^\circ/\text{s}, \hspace{0.15cm} 20^\circ/\text{s}] \), yaw rate within \( [-15^\circ/\text{s}, \hspace{0.15cm} 15^\circ/\text{s}] \), true airspeed within \( [200\text{m/s}, \hspace{0.15cm} 350\text{m/s}] \), and altitude within \( [1000\text{m}, \hspace{0.15cm} 5000\text{m}] \). Understandably, the search space is meaningfully broad. The performance metrics for the evaluation are the minimum altitude, maximum pitch rate command, maximum load factor, and the Euclidean distance between the reference command depicted in Fig.~\ref{fig:nuisanceVSnuisanceFree} and the applied pitch rate command. In order to obtain a meaningful Euclidean distance metric, the applied pitch rate and reference commands are normalized between $0$ and $1$. Then, the dynamic time warping method is employed to compare the similarity between the shapes of the signals by aligning them in time to minimize differences, even if they are out of phase or vary in speed. This is achieved using MATLAB\textsuperscript{\tiny\textregistered}'s \textit{dtw} function, which calculates an optimal match between two sequences by warping their time axes to minimize the cumulative distance between corresponding points. Consequently, the comparison is independent of the application time and magnitudes. The statistical illustration of the Monte Carlo simulations are presented in Fig.~\ref{MonteCarloResults}. 

\begin{figure}[hbt!]
\centering
\includegraphics[width=0.8\textwidth]{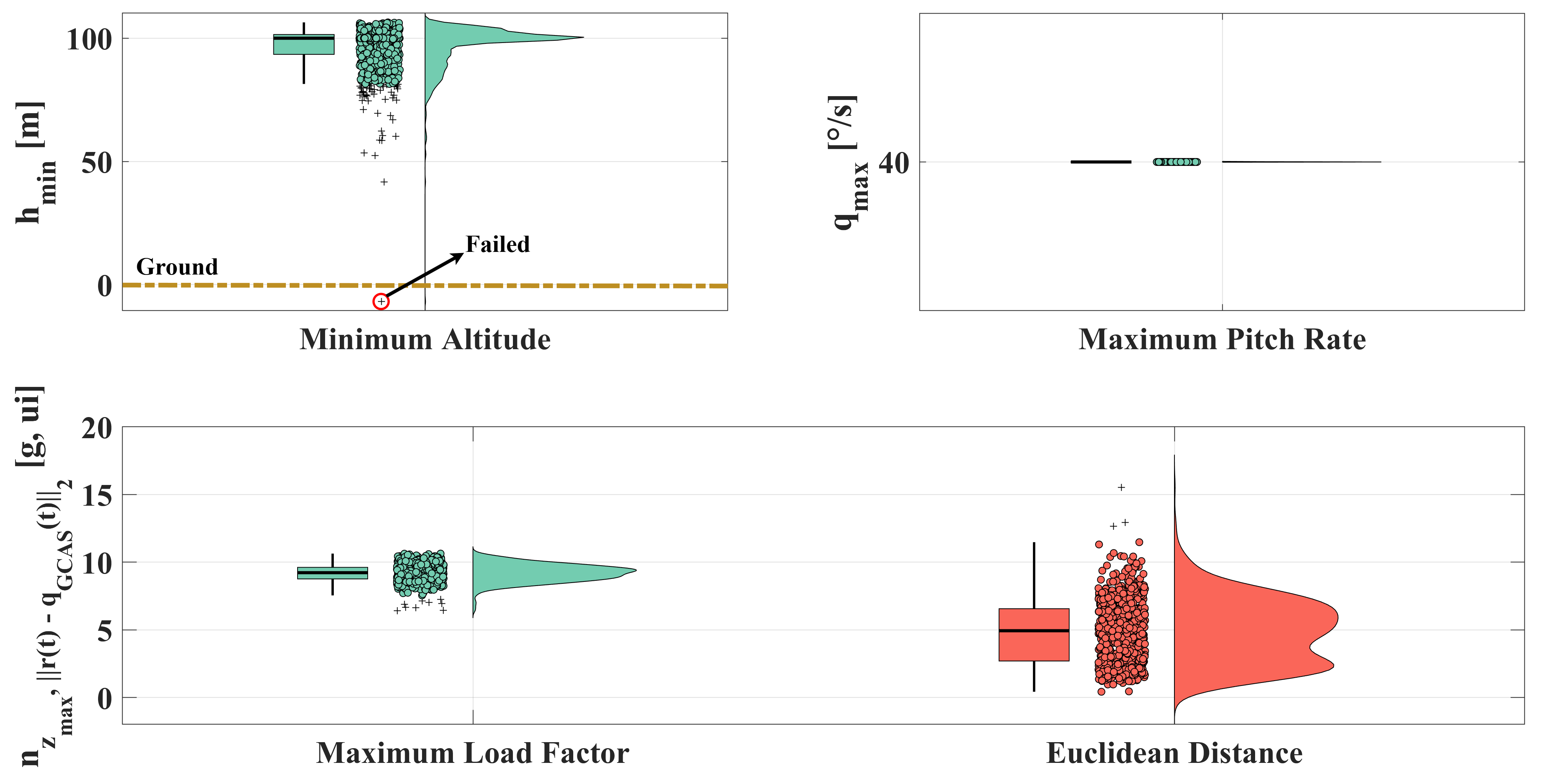}
\caption{Performance metrics of Monte Carlo simulations: minimum altitude, maximum pitch rate, maximum load factor, Euclidean distance between the reference signal and applied pitch rate.}
\label{MonteCarloResults}
\end{figure}

The results indicate a high success rate of the proposed methodology. The average minimum altitude is 100 meters, which corresponds to the predetermined buffer height. This outcome is a clear indication of timely intervention. Additionally, the maximum pitch rate command and maximum load factor reflect the aggressiveness of the Auto-GCAS intervention. The outcomes that should be elaborated are the failed case and the Euclidean distance. The initial conditions for the failed case are \( 1120.6 \) meters for the altitude, \( -64.86^\circ \) for the pitch angle, \( 96.96^\circ \) for the bank angle, and \( 330.1 \, \text{m/s} \) for the true airspeed. These parameters are of concern due to their significant impact in a dive scenario. Notably, the altitude is relatively low to protect the aircraft from such a steep dive (\( -64.86^\circ \) for the pitch angle) and high speed (\( 330.1 \, \text{m/s} \), circa 1 Mach). The histories of this case reveal that the applied pitch rate corresponds to a 9G pull-up at the beginning of the dive and it sustained for approximately 6 seconds; however, this recovery maneuver still fails to prevent a collision. Consequently, this case can be regarded as both extreme and harsh. Furthermore, the Euclidean distance should be as close to \( 0 \) as possible for a perfect overlap between the reference signal and the applied pitch rate command. The average Euclidean distance is approximately \( 5 \). To interpret this value, a comparison is essential. Therefore, the Euclidean distances of the pitch rate commands depicted in Fig.~\ref{statesTerrainMan1} and Fig.~\ref{statesTerrainMan2} are \( 1.2 \) and \( 5.1611 \), respectively, and can be regarded as reference values. Thus, it can be interpreted that the applied pitch rate commands closely resemble the pitch rate shown in Fig.~\ref{statesTerrainMan2}, which is satisfactory in terms of the nuisance-free criterion. Consequently, it is evident that the proposed methodology achieves its objectives with a high success rate of $849$ successful collision avoidance out of $850$ random cases, indicating a \( 99.88\% \) success rate.

\section{Discussion}
\label{discussion}

This section discusses the practical applicability, comparative advantages, aircraft suitability, and limitations of the proposed nuisance-free Auto-GCAS framework, with emphasis on robustness and implementation feasibility.

\subsection{Practical Implementation and Applicability}

The proposed method is designed with real-time implementation in mind. By formulating collision avoidance and flight envelope protection as convex quadratic programs, the algorithm can be integrated into modern flight control computers without requiring extensive computational resources. The adaptive ECBF gains are computed offline across a range of flight conditions (pitch angle, bank angle, airspeed), enabling efficient onboard scheduling without online optimization. Integration into existing avionics architectures is feasible, requiring access to standard state estimates, DTED, and pilot commands, which are common in current modern and advanced aircraft regardless of manned or unmanned. Additionally, the framework is applicable to fixed-wing aircraft platforms, including manned and unmanned aircraft with comparable control authority. It is particularly suitable for avoiding terrain-based obstacles represented with DTED, such as hills or mountainous terrain.

\subsection{Addressing Literature Gaps and Comparative Strengths}

Unlike prior nuisance-free Auto-GCAS approaches that rely on computationally demanding trajectory prediction or simplified 3DoF dynamics, this method uses adaptive ECBFs to generate timely and aggressive recovery maneuvers without sacrificing model fidelity. Additionally, it incorporates a flight envelope protection layer based on control barrier functions, addressing a key shortcoming in earlier studies that lacked comprehensive safety enforcement. Simulation results, including Monte Carlo assessments with a $99.88\%$ success rate, demonstrate the method’s ability to meet all critical objectives: ground collision avoidance, nuisance mitigation, and envelope protection. The avoidance behavior is both continuous and aggressive, closely emulating pilot-preferred interventions and minimizing false alarms.

\subsection{Limitations and Robustness Considerations}

Despite its strengths, the method has some limitations. While the gain schedule improves adaptivity, its resolution may limit performance in rapidly changing flight conditions. This could be addressed by incorporating real-time interpolation or machine learning-based gain tuning. The symmetric flight assumption (due to the bank-to-level requirement), although practical, restricts applicability in highly asymmetric or failure scenarios. Additionally, the fixed terrain scan pattern could be enhanced with adaptive or predictive scanning for improved situational awareness. Furthermore, performance over real geographic terrain requires further investigation, as the current assessment has been conducted using a simple synthetic terrain. 

In terms of robustness, the CBF-based formulation offers inherent resilience to bounded model uncertainties. However, the method still relies on nominal aerodynamic models, and its performance could degrade under severe discrepancies. Future work may explore robust or uncertainty-aware CBF formulations to maintain safety guarantees across wider operational envelopes.

\section{Conclusion}
\label{conclusion}

The challenge of unnecessary and/or untimely interventions by automatic ground collision avoidance systems (Auto-GCAS), which are often perceived as nuisances by pilots, presents a significant issue in authority-sharing between the pilot and the automated system. This study addresses this challenge by introducing a novel framework designed to ensure nuisance-free maneuvers that effectively prevent ground collisions, while eliminating the reliance on computationally intensive trajectory prediction algorithms. For this purpose, exponential control barrier functions (ECBFs) are designed in conjunction with adaptive sliding manifolds. ECBF is responsible for protecting the aircraft from a ground collision, whereas adaptive sliding manifolds ensure that interventions are timely and aggressive, which are scheduled across a range of flight states, including bank angle, pitch angle, and true airspeed. The design of adaptive sliding manifolds are formulated through an offline optimization process, wherein the objective function is constructed to align with the nuisance-free intervention criteria. Afterwards, the optimization is performed across various conditions, including pitch angle, bank angle, and true airspeed, so that the designed sliding manifolds adapt to changing flight circumstances. Additionally, flight envelope protection algorithms are designed using CBF for angle of attack, load factor, and bank angle to provide an additional layer of safety assessment for Auto-GCAS commands. This ensures that the Auto-GCAS commands remain within safe operational limits while maintaining their effectiveness in collision avoidance.

The efficacy of the overall proposed framework has been evaluated through three distinct scenarios and Monte Carlo simulations, conducted under $850$ randomized different initial conditions. In terrain collision avoidance assessments, timely and aggressive commands are generated by the Auto-GCAS, providing maximum amplitude pitch rate commands and maintaining sufficiently close proximity to the buffer zone. Additionally, an authority-sharing scenario is examined, with different pilot commands tested at various points in time. Pilot input is allowed by the Auto-GCAS as long as the imposed constraints are not violated, ensuring that unnecessary interventions do not occur. Finally, Monte Carlo simulations, including $850$ different cases, demonstrate that the proposed framework is highly effective in protecting the aircraft from ground collisions while generating timely and aggressive commands. The success is indicated by the maximum amplitude of commands, load factor, and Euclidean distance between the generated commands and the reference nuisance-free command. The proposed system achieved a remarkable success rate of $99.88\%$ across $850$ randomized Monte Carlo simulations. These results highlight the potential of the proposed framework to significantly mitigate controlled flight into terrain incidents, fostering trust and collaboration between pilots and automation systems.

Consequently, this work contributes to the broader body of safety-critical flight control by advancing the application of adaptive and constraint-aware CBF-based methods to the Auto-GCAS domain. The results offer actionable insights for researchers aiming to develop pilot-compatible, certifiable avoidance systems. Future work will focus on refining the optimization of the ECBF control gain vector in relation to dynamic terrain scan patterns. Additionally, implementing and testing the framework in real-time on a fixed-wing UAV test-bed will further validate its operational feasibility and scalability.

\section*{Acknowledgments}

During the preparation of this work the author(s) used ChatGPT in order to improve language and readability. After using this tool/service, the author(s) reviewed and edited the content as needed and take(s) full responsibility for the content of the publication. Also, the authors are deeply grateful to Dr. Ersin Daş for his valuable discussions during the presentation of this study.

\bibliography{sample}

\begin{thebibliography}{34}
\newcommand{\enquote}[1]{``#1''}
\providecommand{\natexlab}[1]{#1}
\providecommand{\url}[1]{\texttt{#1}}
\providecommand{\urlprefix}{URL }
\expandafter\ifx\csname urlstyle\endcsname\relax
  \providecommand{\doi}[1]{\discretionary{}{}{}https://doi.org/#1}\else
  \providecommand{\doi}[1]{\discretionary{}{}{}\urlstyle{rm}\url{https://doi.org/#1}}\fi

\bibitem[{Kelly and Efthymiou(2019)}]{safety1}
Kelly, D., and Efthymiou, M., \enquote{An analysis of human factors in fifty controlled flight into terrain aviation accidents from 2007 to 2017,} \emph{Journal of Safety Research}, Vol.~69, 2019, p. 155–165.
\newblock \doi{10.1016/j.jsr.2019.03.009}.

\bibitem[{{International Air Transport Association (IATA)}(2018)}]{safety2}
{International Air Transport Association (IATA)}, \enquote{Controlled Flight Into Terrain Accident Analysis Report: 2008--2017 Data,} Technical report, International Air Transport Association, Montreal, Canada, 2018.

\bibitem[{Khatwa and Roelen(2020)}]{safety3}
Khatwa, R., and Roelen, A., \enquote{Controlled Flight Into Terrain (CFIT) Accidents of Air Taxi, Regional \& Major Operators,} \emph{Aviation Safety}, CRC Press, 2020, p. 297–316.
\newblock \doi{10.1201/9780429070372-21}.

\bibitem[{Carpenter et~al.(2019)Carpenter, Gahan, and Cobb}]{AgcasHistory2}
Carpenter, J., Gahan, K., and Cobb, R., \enquote{Automatic-Ground Collision Avoidance System (Auto-GCAS) for Performance Limited Aircraft,} \emph{AIAA Aviation 2019 Forum}, American Institute of Aeronautics and Astronautics, 2019.
\newblock \doi{10.2514/6.2019-3657}.

\bibitem[{Cao et~al.(2012)Cao, Wang, Xu, Yang, Wang, Geng, Gao, Wu, Wang, Zhang, and Sun}]{safety4}
Cao, X.-S., Wang, Y.-C., Xu, L., Yang, C.-B., Wang, B., Geng, J., Gao, Y., Wu, Y.-H., Wang, X.-Y., Zhang, S., and Sun, X.-Q., \enquote{Visual Symptoms and G-Induced Loss of Consciousness in 594 Chinese Air Force Aircrew—A Questionnaire Survey,} \emph{Military Medicine}, Vol. 177, No.~2, 2012, p. 163–168.
\newblock \doi{10.7205/milmed-d-11-00003}.

\bibitem[{Wadley et~al.(2013)Wadley, Jones, Stoner, Griffin, Swihart, Hobbs, Burns, and Bier}]{AgcasHistory}
Wadley, J., Jones, S., Stoner, D., Griffin, E., Swihart, D., Hobbs, K., Burns, A., and Bier, J., \enquote{Development of an Automatic Aircraft Collision Avoidance System for Fighter Aircraft,} \emph{AIAA Infotech@Aerospace (I@A) Conference}, American Institute of Aeronautics and Astronautics, 2013.
\newblock \doi{10.2514/6.2013-4727}.

\bibitem[{Lyons et~al.(2016)Lyons, Ho, Fergueson, Sadler, Cals, Richardson, and Wilkins}]{safety6}
Lyons, J.~B., Ho, N.~T., Fergueson, W.~E., Sadler, G.~G., Cals, S.~D., Richardson, C.~E., and Wilkins, M.~A., \enquote{Trust of an automatic ground collision avoidance technology: A fighter pilot perspective,} \emph{Military Psychology}, Vol.~28, No.~4, 2016, pp. 271--277.
\newblock \doi{10.1037/mil0000124}.

\bibitem[{Whinnery and Whinnery(1990)}]{safety5}
Whinnery, J.~E., and Whinnery, A.~M., \enquote{Acceleration-induced loss of consciousness: a review of 500 episodes,} \emph{Archives of Neurology}, Vol.~47, No.~7, 1990, pp. 764--776.
\newblock \doi{10.1001/archneur.1990.00530070058012}.

\bibitem[{Ho et~al.(2017)Ho, Sadler, Hoffmann, Zemlicka, Lyons, Fergueson, Richardson, Cacanindin, Cals, and Wilkins}]{safety7}
Ho, N., Sadler, G.~G., Hoffmann, L.~C., Zemlicka, K., Lyons, J., Fergueson, W., Richardson, C., Cacanindin, A., Cals, S., and Wilkins, M., \enquote{A longitudinal field study of auto-GCAS acceptance and trust: First-year results and implications,} \emph{Journal of Cognitive Engineering and Decision Making}, Vol.~11, No.~3, 2017, pp. 239--251.
\newblock \doi{10.1177/1555343417701019}.

\bibitem[{Hobbs et~al.(2023)Hobbs, Mote, Abate, Coogan, and Feron}]{runtimeMagazine}
Hobbs, K.~L., Mote, M.~L., Abate, M.~C., Coogan, S.~D., and Feron, E.~M., \enquote{Runtime assurance for safety-critical systems: An introduction to safety filtering approaches for complex control systems,} \emph{IEEE Control Systems Magazine}, Vol.~43, No.~2, 2023, pp. 28--65.
\newblock \doi{10.1109/MCS.2023.3234380}.

\bibitem[{Trombetta(2016)}]{Agcas3}
Trombetta, J.~V., \enquote{Multi-Trajectory Automatic Ground Collision Avoidance System with Flight Tests (Project Have ESCAPE),} Master's thesis, Air Force Institute of Technology, Department of Aeronautics and Astronautics, Dayton, OH, USA, Mar. 2016.
\newblock \doi{AD1054202}.

\bibitem[{Suplisson(2015)}]{Agcas11}
Suplisson, A.~W., \enquote{Optimal Recovery Trajectories for Automatic Ground Collision Avoidance Systems (Auto GCAS),} Ph.d. thesis, Air Force Institute of Technology, Department of Aeronautics and Astronautics, Dayton, OH, USA, Mar. 2015.

\bibitem[{Swihart et~al.(2011)Swihart, Barfield, Griffin, Lehmann, Whitcomb, Flynn, Skoog, and Processor}]{Agcas7}
Swihart, D.~E., Barfield, A.~F., Griffin, E.~M., Lehmann, R.~C., Whitcomb, S.~C., Flynn, B., Skoog, M.~A., and Processor, K.~E., \enquote{Automatic ground collision avoidance system design, integration, \& flight test,} \emph{IEEE Aerospace and Electronic Systems Magazine}, Vol.~26, No.~5, 2011, pp. 4--11.
\newblock \doi{10.1109/MAES.2011.5871385}.

\bibitem[{Sorokowski et~al.(2015)Sorokowski, Skoog, Burrows, and Thomas}]{Agcas6}
Sorokowski, P., Skoog, M., Burrows, S., and Thomas, S., \enquote{Small UAV Automatic Ground Collision Avoidance System Design Considerations and Flight Test Results,} NASA Technical Memorandum NASA/TM‑2015‑218732, NASA Armstrong Flight Research Center, Edwards, CA, Jun. 2015.

\bibitem[{Griffin et~al.(2012)Griffin, Turner, Whitcomb, Swihart, Bier, Hobbs, and Burns}]{Agcas5}
Griffin, E.~M., Turner, R.~M., Whitcomb, S.~C., Swihart, D.~E., Bier, J.~M., Hobbs, K.~L., and Burns, A.~C., \enquote{Automatic ground collision avoidance system design for pre-block 40 f-16 configurations,} \emph{Asia-Pacific international symposium on aerospace technology}, Asia‑Pacific International Symposium on Aerospace Technology, Citeseer, 2012.

\bibitem[{Tayal et~al.(2024)Tayal, Singh, Keshavan, and Kolathaya}]{ca2}
Tayal, M., Singh, R., Keshavan, J., and Kolathaya, S., \enquote{Control Barrier Functions in Dynamic UAVs for Kinematic Obstacle Avoidance: A Collision Cone Approach,} \emph{2024 American Control Conference (ACC)}, IEEE, 2024, p. 3722–3727.
\newblock \doi{10.23919/acc60939.2024.10644548}.

\bibitem[{Zhao et~al.(2020)Zhao, Wang, Ying, Sridhar, and Liu}]{ca4}
Zhao, P., Wang, W., Ying, L., Sridhar, B., and Liu, Y., \enquote{Online multiple-aircraft collision avoidance method,} \emph{Journal of Guidance, Control, and Dynamics}, Vol.~43, No.~8, 2020, pp. 1456--1472.
\newblock \doi{10.2514/1.G005161}.

\bibitem[{Agarwal et~al.(2024)Agarwal, Agrawal, Tayal, Jagtap, and Kolathaya}]{ca1}
Agarwal, A., Agrawal, R., Tayal, M., Jagtap, P., and Kolathaya, S., \enquote{Real Time Safety of Fixed-wing UAVs using Collision Cone Control Barrier Functions,} \emph{arXiv preprint arXiv:2407.19335}, 2024.
\newblock \doi{10.48550/arXiv.2407.19335}.

\bibitem[{Irfan et~al.(2020)Irfan, Julian, Wu, Barrett, Kochenderfer, Meng, and Lopez}]{ca5}
Irfan, A., Julian, K.~D., Wu, H., Barrett, C., Kochenderfer, M.~J., Meng, B., and Lopez, J., \enquote{Towards verification of neural networks for small unmanned aircraft collision avoidance,} \emph{2020 AIAA/IEEE 39th Digital Avionics Systems Conference (DASC)}, IEEE, 2020, pp. 1--10.
\newblock \doi{10.1109/DASC50938.2020.9256616}.

\bibitem[{Molnar et~al.(2025)Molnar, Kannan, Cunningham, Dunlap, Hobbs, and Ames}]{ca3}
Molnar, T.~G., Kannan, S.~K., Cunningham, J., Dunlap, K., Hobbs, K.~L., and Ames, A.~D., \enquote{Collision Avoidance and Geofencing for Fixed-Wing Aircraft With Control Barrier Functions,} \emph{IEEE Transactions on Control Systems Technology}, 2025, p. 1–16.
\newblock \doi{10.1109/tcst.2025.3536215}.

\bibitem[{Chen and Zhao(2024)}]{Agcas1}
Chen, R., and Zhao, L., \enquote{Ground collision avoidance system with multi-trajectory risk assessment and decision function,} \emph{The Aeronautical Journal}, Vol. 128, No. 1327, 2024, p. 2038–2053.
\newblock \doi{10.1017/aer.2024.13}.

\bibitem[{Yuan et~al.(2024)Yuan, Li, Lu, Niu, Liu, and Gao}]{Agcas2}
Yuan, S., Li, Q., Lu, B., Niu, X., Liu, Y., and Gao, W., \enquote{Trajectory prediction for fighter aircraft ground collision avoidance based on the model predictive control technique,} \emph{Aerospace Systems}, Vol.~8, No.~1, 2024, p. 61–70.
\newblock \doi{10.1007/s42401-024-00300-6}.

\bibitem[{Kirkendoll and Hook(2021)}]{Agcas8}
Kirkendoll, Z., and Hook, L.~R., \enquote{Automatic Ground Collision Avoidance System Trajectory Prediction and Control for General Aviation,} \emph{2021 IEEE/AIAA 40th Digital Avionics Systems Conference (DASC)}, IEEE, 2021, p. 1–10.
\newblock \doi{10.1109/dasc52595.2021.9594506}.

\bibitem[{Urban et~al.(2023)Urban, Hubbard, Hook, Skoog, and Sizoo}]{Agcas10}
Urban, J.~M., Hubbard, A.~M., Hook, L.~R., Skoog, M., and Sizoo, D., \enquote{Evaluation of a Trajectory Prediction Algorithm Within a Ground Collision Avoidance System,} \emph{2023 IEEE/AIAA 42nd Digital Avionics Systems Conference (DASC)}, IEEE, 2023, p. 1–10.
\newblock \doi{10.1109/dasc58513.2023.10311267}.

\bibitem[{Maley et~al.(2023)Maley, Hubbard, Urban, and Hook}]{Agcas12}
Maley, P.~D., Hubbard, A.~M., Urban, J.~M., and Hook, L.~R., \enquote{Recovery Autopilot Analysis for a General Aviation Ground Collision Avoidance System,} \emph{2023 IEEE Aerospace Conference}, IEEE, 2023, p. 1–12.
\newblock \doi{10.1109/aero55745.2023.10115694}.

\bibitem[{Altunkaya et~al.(2024)Altunkaya, {\c{C}}atak, Demir, Koyuncu, and {\"O}zkol}]{altunkaya2}
Altunkaya, E.~C., {\c{C}}atak, A., Demir, M., Koyuncu, E., and {\"O}zkol, {\.I}., \enquote{Stability and Safety Assurance of an Aircraft: A Practical Application of Control Lyapunov and Barrier Functions,} \emph{Available at SSRN 4823223}, 2024.
\newblock \doi{10.2139/ssrn.4823223}.

\bibitem[{Morelli(1998)}]{aerodynamicPolynomials}
Morelli, E.~A., \enquote{Global nonlinear parametric modelling with application to F-16 aerodynamics,} \emph{Proceedings of the 1998 American Control Conference. ACC (IEEE Cat. No. 98CH36207)}, Vol.~2, IEEE, 1998, pp. 997--1001.
\newblock \doi{10.1109/ACC.1998.703559}.

\bibitem[{Thomas et~al.(2004)Thomas, Kwatny, and Chang}]{forceMomentEquations}
Thomas, S., Kwatny, H.~G., and Chang, B.-C., \enquote{Nonlinear reconfiguration for asymmetric failures in a six degree-of-freedom F-16,} \emph{Proceedings of the 2004 American Control Conference}, Vol.~2, IEEE, 2004, pp. 1823--1828.
\newblock \doi{10.23919/ACC.2004.1386845}.

\bibitem[{Nguyen et~al.(1979)Nguyen, Ogburn, Gilbert, Kibler, Brown, and Deal}]{F16}
Nguyen, L.~T., Ogburn, M.~E., Gilbert, W.~P., Kibler, K.~S., Brown, P.~W., and Deal, P.~L., \enquote{Simulator Study of Stall/Post‑Stall Characteristics of a Fighter Airplane with Relaxed Longitudinal Static Stability,} NASA Technical Paper NASA TP‑1538 (L‑12854), NASA Langley Research Center, Hampton, VA, Dec. 1979.
\newblock \doi{80N14136}.

\bibitem[{Matamoros and de~Visser(2018)}]{allocation1}
Matamoros, I., and de~Visser, C.~C., \enquote{Incremental nonlinear control allocation for a tailless aircraft with innovative control effectors,} \emph{2018 AIAA Guidance, Navigation, and Control Conference}, 2018, p. 1116.
\newblock \doi{10.2514/6.2018-1116}.

\bibitem[{Su et~al.(2022)Su, Hu, Wang, He, Cong, and Han}]{allocation2}
Su, M., Hu, J., Wang, Y., He, Z., Cong, J., and Han, L., \enquote{A multiobjective incremental control allocation strategy for tailless aircraft,} \emph{International Journal of Aerospace Engineering}, Vol. 2022, No.~1, 2022, p. 6515234.
\newblock \doi{10.1155/2022/6515234}.

\bibitem[{Snell et~al.(1992)Snell, Enns, and Garrard~Jr}]{snell}
Snell, S.~A., Enns, D.~F., and Garrard~Jr, W.~L., \enquote{Nonlinear inversion flight control for a supermaneuverable aircraft,} \emph{Journal of guidance, control, and dynamics}, Vol.~15, No.~4, 1992, pp. 976--984.
\newblock \doi{10.2514/3.20932}.

\bibitem[{Altunkaya(2023)}]{altunkaya1}
Altunkaya, {\c{C}}.~E., \enquote{Online Loss of Control Prevention of an Agile Aircraft: Lyapunov-based Dynamic Command Saturation Approach,} Master's thesis, Istanbul Technical University, Graduate School, Istanbul, Türkiye, June 2023.

\bibitem[{Altunkaya et~al.(2025)Altunkaya, {\c{C}}atak, Koyuncu, and {\"O}zkol}]{altunkaya3}
Altunkaya, E.~{\c{C}}., {\c{C}}atak, A., Koyuncu, E., and {\"O}zkol, {\.I}., \enquote{Loss-of-Control Prevention of an Agile Aircraft: Dynamic Command Saturation Approach,} \emph{Journal of Guidance, Control, and Dynamics}, Vol.~48, No.~2, 2025, pp. 424--436.
\newblock \doi{10.2514/1.G008188}.

\end{thebibliography}

\end{document}